\newtheorem{theorem}{Theorem}
\newtheorem{lemma}[theorem]{Lemma}
\newtheorem{proposition}[theorem]{Proposition}
\newtheorem{obs}[theorem]{Observation}
\renewcommand{\epsilon}{\varepsilon}
\newcommand{\out}{\mathcal{S}}
\newcommand{\vett}[1]{\mathbf{#1} }
\renewcommand{\b}{\vett b}
\newcommand{\bi}{\b_{-i}}
\newcommand{\lmin}{t_{\inf}}
\newcommand{\lmax}{t_{\sup}}
\newcommand{\bp}{\mathbf{b}}
\newcommand{\tp}{\mathbf{t}}
\newcommand{\calR}{\mathcal{R}}
\newcommand{\bsig}{\boldsymbol{\sigma}}
\newcommand{\real}{\mbox{$\mathbb{R}$}}
\newcommand{\M}{\mathcal{M}}
\DeclareFontFamily{U}{mathx}{\hyphenchar\font45}
\DeclareFontShape{U}{mathx}{m}{n}{
	<5> <6> <7> <8> <9> <10>
	<10.95> <12> <14.4> <17.28> <20.74> <24.88>
	mathx10
}{}
\DeclareSymbolFont{mathx}{U}{mathx}{m}{n}
\DeclareMathSymbol{\bigtimes}{1}{mathx}{"91}
\title{Probabilistic Verification for Obviously Strategyproof Mechanisms}
\author{Diodato Ferraioli \and Carmine Ventre}
\date{}
\begin{document}

\raggedbottom

\maketitle

\begin{abstract}
Obviously strategyproof (OSP) mechanisms maintain the incentive compatibility 
of agents that are not fully rational.
They have been object of a number of studies since their recent definition.
A research agenda, initiated in \cite{FerraioliV17}, is to find a small (possibly, the smallest) set 
of conditions allowing to implement an OSP mechanism.
To this aim,
we define a model of probabilistic verification wherein agents are caught misbehaving with a certain probability, and show how OSP mechanisms can implement every social choice function at the cost of either imposing very large fines for lies or verifying a linear number of agents.
\end{abstract}

\section{Introduction}
Will people strategize against an incentive-compatible mechanism? The answer depends on whether they will \emph{understand} that doing so is against their own interest and, ultimately, on their rationality and cognitive skills. This question has often been raised in literature (see, e.g., \citep{sandholm2003sequences,FerraioliVA15}) and much of the recent research in mechanism design is motivated by this question.
Several definitions for ``simple'' mechanisms have been recently given in literature:
posted-price mechanisms and variants \citep{chawla2010multi,babaioff2014simple,adamczyk2015sequential}, Bulow-Klemperer-like auctions \citep{hartline2009simple}, verifiably truthful mechanisms \citep{branzei2015verifiably}.
This quest for the right definition for simple mechanisms
culminated with the introduction of \emph{obviously strategyproof} (OSP) mechanisms \citep{li2015obviously}.

Obvious strategyproofness focuses on how a mechanism is executed (e.g., English auction vs. sealed bid second price auction), and requires that whenever an agent takes an action during the execution of the mechanism, the ``truthful behavior'' must be dominant for that agent, even if no reasoning is done about the future actions of the other agents.
This concept is motivated by the experimental evidence that some mechanism implementations (e.g., clock auction) are easier to understand than other theoretically equivalent ones (e.g., sealed-bid auction). OSP mechanisms have also solid theoretical foundations: they are the only ones that preserve the incentive-compatibility of agents who lack contingent reasoning skills \citep{li2015obviously} and they satisfy a natural generalization of standard decision theory axioms \citep{zhang2017partition}.

This concept has attracted a considerable amount of recent work \citep{ashlagi2015no,BadeG17,FerraioliV17,pycia2016obvious} that mainly focuses on the properties and the limitations of these mechanisms.
Of particular interest for our study are the results proved by \cite{FerraioliV17} showing that OSP mechanisms cannot have good approximation guarantees for machine scheduling and facility location,
two canonical optimization problems studied in the area.
However, monetary transfers are sufficient for the existence of optimal OSP mechanisms when the designer can ``monitor'' all agents
(meaning that a lying agent is artificially made to receive an utility
that is the worst between the one computed according to her true type
and the one computed as if her type coincided with her bid).
Since money is undesirable in many applications
(cf. the vast literature on approximate mechanism design without money initiated by \cite{PT09})
our main aim here is to understand whether there are novel ways for the designer
to exert control over the agents that
can reconcile approximation and OSP mechanisms
with limited or no transfer of money.

\paragraph{Our contribution.}
We introduce a model of \emph{probabilistic verification} wherein the mechanism designer has access to a (potentially faulty) verification device that she can use at runtime to check whether an agent has lied or not. The device will catch the lie of the checked agent with certainty, or with a certain probability if faulty.  For example, whenever the type $t$ of an agent is her location on the real line (as in facility location), the designer can use a GPS logger to check whether the agent location is the same as her reported type $b$. In our terminology, such a tool is faulty if its reading $t'$ of $t$ is subject to some measurement error $\delta$ and, therefore, the agent would be caught only if $|b-t'|>\delta$; more generally, one could imagine different tools that make mistakes in their measurements with some probability rather than in range (e.g., it gets better as the difference between reported and real type increases). This notion generalizes and combines the different notions of verification introduced in related literature, see, e.g., \citep{ec12,PenVen09} -- a longer discussion on this aspect can be found in Section \ref{sec:preli}.
With respect to monitoring, the mechanism designer has in our probabilistic model a more general way to define fines for lying agents, whilst, on the other hand, might have a faulty verification device.

We begin by studying what we call the \emph{full probabilistic verification model}, wherein every agent is verifiable.
and therefore there is a non-null probability of catching lies.
We prove that, in this setting, it is possible to obtain an OSP mechanism for every specific problem of interest;
we essentially show that we can always define verification probabilities and corresponding fines to make any lie obviously dominated.
On the technical level, we show that there is a trade-off between the kind of verification device needed (i.e., the verification probabilities)
and the amount of fines imposed to lying agents that are caught. Interestingly, our results also imply that we can set the fines
high enough to only verify a constant number of agents in expectation.

The result above has two main limitations: firstly, it
requires all the agents to be verifiable -- this might be impossible in some contexts (e.g., not all the agents might have been equipped with a GPS logger); secondly, the fines might have a value that makes their enforceability doubtful (this is certainty the case for the fines needed to guarantee a low number of verified agents).
We therefore look at the \emph{partial probabilistic verification model}, where for some agents we cannot use any verification (and so the combination of fines and probabilistic checks will not make lying obviously dominated). In the main technical contribution of this work, we prove that there is a problem such that all $\epsilon$-OSP mechanisms (i.e., agents will not deviate for small gains $\epsilon$) that solve this problem need to verify in expectation a linear number of agents. We focus on the well studied \emph{public project problem} \citep{jackson1992implementing} and identify a small domain for which ``many'' agents need to be verified by any $\epsilon$-OSP mechanism that solves the problem.
This key result is then leveraged to define instances where every $\epsilon$-OSP mechanism for this problem needs to verify a linear number of agents.
Furthermore, this result is extended to $\epsilon$-OSP mechanisms that solve the problem only asymptotically (i.e., the solution returned by the mechanism gets closer to the right one as the number of agents goes to infinity) and to $\epsilon$-OSP Bayesian mechanisms (i.e., mechanism designer knows a prior on the distribution of agents' valuations); incidentally, as far as we know, this is one of the first results on OSP mechanisms in the Bayesian setting -- it is interesting to observe that, in this context, the prior does not give any advantage to the designer of OSP mechanisms.

We finally prove that this result is essentially tight for $\epsilon$-OSP \emph{in expectation} mechanisms that implement a social choice function \emph{asymptotically}. In detail, we connect OSP with \emph{differential privacy} and show how the exponential mechanism of \cite{diff} can be implemented with partial probabilistic verification to become $\epsilon$-OSP in expectation and verify $n - o(n)$ agents. Although the proofs basically follow the known ones, we regard this result interesting for three main reasons (in addition to showing the tightness of the lower bound). Firstly, it shows how, through verification, differential privacy can be related to OSP just like truthfulness. Secondly, the mechanism becomes implementable not just in our probabilistic framework but also with selective verification, a widely studied concept studied by economists and computer scientists (cf, e.g.,  \citep{FotakisTZ16}). Thirdly, it is, to the best of our knowledge, the first instance of a mechanism whose obvious strategyproofness is guaranteed in expectation over the random choices of the mechanism. We actually need to extend the definition of \cite{li2015obviously} to allow for this --- we regard such a definition as an important conceptual contribution of this work.

\section{Preliminaries}\label{sec:preli}
A mechanism design setting is defined by a set $N$ of $n$ \emph{selfish agents} and a set of allowed \emph{outcomes} $\out$.
Each agent $i$ has a \emph{type} $t_i \in D_i$, where $D_i$ is called the \emph{domain} of $i$.
The type $t_i$ is usually assumed to be \emph{private knowledge} of agent $i$.
Sometimes, it is possible that a distribution $\Delta_i$ from which $t_i$ is drawn is known \emph{to the designer}: when this is the case, we say that we are in a \emph{Bayesian} setting.\footnote{Observe that in our Bayesian setting, the players are \emph{not} aware of the prior. Thus, the OSP-ness of our mechanisms are prior independent. A different meaning is sometimes used in literature for Bayesian mechanism design, wherein the prior also modifies the definition of truthfulness.}
In the rest of the paper, we will always refer to the classical non-Bayesian setting, unless differently specified.
Each selfish agent $i$ is equipped with a \emph{valuation function} $v_i \colon D_i \times \out \rightarrow \real$.
For $t_i \in D_i$ and $x \in \out$, $v_i(t_i, x)$ is the valuation that agent $i$ has for outcome $x$ when her type is $t_i$.
We will often use $t_i(x)$ as a shorthand for $v_i(t_i, x)$.
We say that the domain $D_i$ of agent $i$ is \emph{bounded} if $t_i(x) \in [\lmin, \lmax]$ for all $i$, $t \in D_i$, $x \in \out$.

A \emph{mechanism} is a process for selecting an outcome $X \in \out$.
Specifically, we can model it as an \emph{extensive game form with consequences in $\out$},
represented by a tuple $\M=(H, \prec, P, A, \mathcal{A}, \delta_c, (\mathcal{I}_i)_{i \in N}, g)$,
where:
\begin{itemize}
 \item $H$ is a set of \emph{histories} and $\prec$ is a \emph{partial order} on $H$ such that $(H, \prec)$ form an \emph{arborescence} (i.e., a directed tree with all edges pointing away from the root) of finite depth. We denote with $h_{\emptyset}$ the \emph{empty history} at the root of the arborescence, and with $Z$ the set of \emph{terminal histories}, that appear as leafs of the arborescence. Moreover, we denote as $\sigma(h)$ the set of successors of history $h$ in the arborescence;
 \item $P \colon H \setminus Z \rightarrow N \cup \{c\}$ is the \emph{player function}, that determines at each history of the mechanisms which player will take the next action, where $c$ denotes a \emph{chance} move, that is an action taken by the mechanism without interacting with any agent;
 \item $A$ is the set of all available \emph{actions} that can be taken by the agents or by chance;
 \item $\mathcal{A} \colon H \setminus Z \rightarrow A$ is the \emph{action function} that assigns to each history the set of actions that $P(h)$ can take. We require that $\mathcal{A}(h)$ is one-to-one with $\sigma(h)$, i.e., to different histories immediately reachable from $h$ there must correspond different actions taken by the agent $P(h)$ (e.g., if at history $h$ the agent $P(h)$ plays and her available actions are only saying yes or no, then $h$ has only two successors, $h'$ and $h''$, and a ``yes'' answer must lead to history $h'$ only, while a ``no'' answer must lead to history $h''$ only);
 \item $\delta_c$ is the \emph{chance} function, that is a probability distribution on the actions $\mathcal{H}$ available at histories $h$ such that $P(h) = c$.
 It serves to model random coin tosses of the mechanism;
 \item $\mathcal{I}_i$ is the collection of \emph{information sets} of $i$, that is 
 a partition of $\{h \colon P(h)=i\}$, i.e., the set of histories in which $i$ takes an action, such that $\mathcal{A}(h) = \mathcal{A}(h')$ for every $h, h'$ in the same cell of the partition, and each action is available at only one information set; formally, for every pair of information sets $I_i, I_i' \in \mathcal{I}_i$, if $a \in \mathcal{A}(I_i)$, then $a \notin \mathcal{A}(I'_i)$, where $\mathcal{A}(I_i) = \mathcal{A}(h)$ for any $h \in I_i$. Information sets are used to model incomplete information by agents: whenever agent $i$ who is called to take an action does not know if she is at history $h$ or at history $h'$ (e.g., because multiple agents are taking actions at the same time), then we put $h$ and $h'$ in the same information set $I_i$. Clearly, since these two histories are indistinguishable, then it must be the case that the set of actions available in them must be exactly the same. Moreover, for two distinct information sets it is without loss of generality to assign different labels the actions available at these information sets;
 \item $g \colon Z \rightarrow \out$ is the \emph{outcome function}, and assigns to each terminal history an outcome.
\end{itemize}

Given a mechanism $\M$, a \emph{strategy} $S_i \colon \mathcal{I}_i \rightarrow A$ of agent $i$ for this game
assigns at each information set of agent $i$ an action among the ones available at that information set,
i.e., $S_i(I_i) \in \mathcal{A}(I_i)$ for every $I_i \in \mathcal{I}_i$.
Then given a realization $d_c$ of $\delta_c$, the strategy $S_i$ of agent $i$,
and strategies $S_{-i} = (S_{1}, \ldots, S_{i-1}, S_{i+1}, \ldots, S_n)$ for every remaining player,
we denote the corresponding terminal history of the game as $z(d_c,S_i,S_{-i})$
and the outcome of the mechanism as $\M(d_c,S_i,S_{-i}) = g(z(d_c,S_i,S_{-i}))$.
Then, a strategy $S_i$ is \emph{$\varepsilon$-weakly dominant} for a player of type $t_i$ if, {for every realization $d_c$}
$$
t_i(\M(d_c,S_i,S_{-i})) \geq t_i(\M(d_c,S'_i,S_{-i})) - \varepsilon
$$
for every $S'_i$ and every $S_{-i}$.
Strategy $S_i$ is \emph{$\varepsilon$-weakly dominant} {in expectation} for a player of type $t_i$ if
$$
 E_{d_c \sim \delta_c}[t_i(\M(d_c,S_i,S_{-i}))] \geq E_{d_c \sim \delta_c}[t_i(\M(d_c,S'_i,S_{-i}))] - \varepsilon
$$
for every $S'_i$ and every $S_{-i}$.

Given two strategies $S_i$ and $S'_i$, their \emph{earliest point of departure} $\alpha(S_i, S'_i)$ is the collection of information sets such that
$S_i(I_i) \neq S'_i(I_i)$ but there is $S_{-i}$ and $d_c$ such that $I_i$ is reached both when the game is played with strategies $(S_i, S_{-i}, d_c)$ and with strategies $(S'_i, S_{-i}, d_c)$. In this case, we say that $(S_{-i},d_c)$ witnesses $I_i$ being in $\alpha(S_i, S'_i)$. Given an information set $I_i \in \alpha(S_i, S'_i)$, we let $W(I_i, S_i, S'_i)$ denote the set of pairs $(S_{-i}, d_c)$ that witness $I_i$ being in $\alpha(S_i, S'_i)$.
Then, a strategy $S_i$ is \emph{$\varepsilon$-obviously dominant} for a player of type $t_i$ if for every $S'_i$ and every $I_i \in \alpha(S_i, S'_i)$
$$
 \inf_{(S_{-i},d_c) \in W(I_i, S_i, S'_i)} t_i(\M(d_c,S_i,S_{-i})) \geq \sup_{(S_{-i},d_c) \in W(I_i, S_i, S'_i)} t_i(\M(d_c,S'_i,S_{-i})) - \varepsilon.
$$
That is, a strategy $S_i$ is $\varepsilon$-obviously dominant if whenever agent $i$ must take an action, then, by taking the one suggested by $S_i$ she achieves an utility that is at most $\varepsilon$ far away from what she would achieve by taking a different action, \emph{regardless of what other players will do from that time on}.
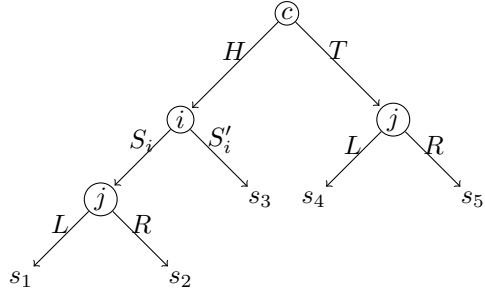
\begin{figure}[htb]
 \label{fig:expOSP}
		\centering
		\begin{tikzpicture}[shorten >=1pt,node distance=1.5cm,inner sep=1pt]
		\tikzstyle{place}=[circle,draw]
		\tikzstyle{placew}=[rectangle]
		\node[place] (x)  		     {$c$};
		\node[place] (y)   [below left of=x, node distance=2cm]  {$i$};
		\node[place] (a)   [below right of=x, node distance=2cm]  {$j$};
		\node[placew] (e)   [below left of=a]  {$s_4$};
		\node[placew] (f)   [below right of=a]  {$s_5$};
		\node[place] (z) [below left of=y]  {$j$};
		\node[placew] (b)   [below right of=y]  {$s_3$};
		\node[placew] (c)   [below left of=z]  {$s_1$};
		\node[placew] (d)   [below right of=z]  {$s_2$};
		\path[->] (x) edge node [label={$H$}] {} (y);
		\path[->] (x) edge node [label={$T$}] {} (a);
		\path[->] (a) edge node [label={$L$}] {} (e);
		\path[->] (a) edge node [label={$R$}] {} (f);
		\path[->] (y) edge node [label={$S_i$}] {} (z);
		\path[->] (y) edge node [label={$S'_i$}] {} (b);
		\path[->] (z) edge node [label={$L$}] {} (c);
		\path[->] (z) edge node [label={$R$}] {} (d);
		\end{tikzpicture}
 \caption{This arborescence describes the following mechanism: it first tosses a coin; if the coin lands on head, then it interacts only once with agent $i$, otherwise it returns an outcome from $\out$ without any interaction with agent $i$. Hence, in $\alpha(S_i, S'_i)$ there is only one information set, that in turn contains a single history $h$, corresponding to the single node labeled $i$ in the arborescence. However, this node can be only reached if the coin lands on head. Thus, if agent $i$ interacts with the mechanism, then she knows that realization $d_c = T$ is impossible, and hence she can evaluate her utility by looking only at outcomes reachable in the arborescence rooted at history $h$.}
\end{figure}

We next extend these concepts to define obvious dominance in expectation. We let $\widehat{W}(I_i, S_i, S'_i)$ denote the set $S_{-i}$ for which there is $d_c$ such that the pair $(S_{-i}, d_c)$ witnesses $I_i$ being in $\alpha(S_i, S'_i)$. Moreover, let $\delta_c \mid (I_i, S_i, S'_i, S_{-i})$ be the probability on the realizations $d_c$ of $\delta_c$ conditioned on the pair $(S_{-i}, d_c)$ witnessing that $I_i$ is in $\alpha(S_i, S'_i)$. {We say that a strategy $S_i$ is \emph{$\varepsilon$-obviously dominant in expectation} for a player of type $t_i$ if for every $S'_i$ and every $I_i \in \alpha(S_i, S'_i)$
$$
 \inf_{S_{-i} \in \widehat{W}(I_i, S_i, S'_i)} E_{d_c \sim \widehat{\delta_c}}[t_i(\M(d_c,S_i,S_{-i}))] \geq \sup_{S_{-i} \in \widehat{W}(I_i, S_i, S'_i)} E_{d_c \sim \widehat{\delta_c}}[t_i(\M(d_c,S'_i,S_{-i}))] - \varepsilon
$$
where $\widehat{\delta_c}$ is used as a shorthand for $\delta_c \mid (I_i, S_i, S'_i, S_{-i})$.}  It is important to note that the conditioning of $\delta_c$ is needed; as showed in Figure~\ref{fig:expOSP}, there may be some randomized choices of the mechanism that do not allow to reach a certain information set $I_i$. 

A mechanism designer would aim for each agent to select one specific strategy among the available ones when her type is $t_i$
(e.g., the designer would like that agents bid truthfully in sealed bid auctions,
and that agents leave only when the current price is above their type in English auctions).
We denote with $\mathbf{S}_i(t_i)$ the strategy that the designer would expect agent $i$ to play when her type is $t_i$,
and we say that $\mathbf{S}_i(t_i)$ is \emph{signalling} $t_i$.
Then, a mechanism is \emph{$\varepsilon$-strategy-proof} ($\varepsilon$-SP) if for every $i$ and every $t_i$, strategy $\mathbf{S}_i(t_i)$ is $\varepsilon$-weakly dominant\footnote{{For randomized mechanism, this concept is usually referred to as \emph{universally $\varepsilon$-strategy-proof}.}}. {If strategies $\mathbf{S}_i(t_i)$ are only $\varepsilon$-weakly dominant in expectation, then we say that the mechanism is $\varepsilon$-SP in expectation.}
A mechanism is instead \emph{$\varepsilon$-obviously strategy-proof} ($\varepsilon$-OSP) if for every $i$ and every $t_i$, strategy $\mathbf{S}_i(t_i)$ is $\varepsilon$-obviously dominant.
{As above, if strategy $\mathbf{S}_i(t_i)$ is only $\varepsilon$-obviously dominant in expectation, then the mechanism is said to be $\varepsilon$-OSP in expectation.}
Finally, a mechanism is OSP if it is $0$-OSP.
Observe that if a mechanism is obviously strategy-proof {(in expectation)}, then it is also strategy-proof {(in expectation)}.

With a slight abuse of notation, we write below $\M({d_c},(\mathbf{S}_i(b_i))_i)$ as $\M({d_c},\b)$, $\b=(b_1, \ldots, b_n)$, to link the outcome computed by the mechanism to the types of the agents more clearly.
{Moreover, for deterministic mechanisms we omit $d_c$.}
Given a \emph{social choice function} $f \colon D \rightarrow \out$, where $D = D_1 \times \cdots \times D_n$ is the set of type profiles $\b$, a mechanism $\M$ is said to \emph{implement} $f$ (in expectation, respectively) if $\M(\b) = f(\b)$ ($E_{{d_c \sim \delta_c}}[\M({d_c,}\b)] = f(\b)$, respectively) for every $\b$.
$\M$ is instead said to \emph{implement $f$ asymptotically} (in expectation, respectively) if
$\lim_{n\rightarrow \infty}\M(\b) = f(\b)$ ($\lim_{n\rightarrow \infty}E_{{d_c \sim \delta_c}}[\M({d_c,}\b)] = f(\b)$, respectively).

\paragraph{Probabilistic verification.}
We introduce a general model of probabilistic verification, inspired by \citep{ec12,PenVen09}.

Fix $i$ and $\bi$. Let $t$ and $t'$ denote the true and reported type of agent $i$, respectively. A \emph{mechanism with probabilistic verification} $\M$ catches agent $i$ lying with probability $(1-p^{i}_{t',t}(\bi))$ and punishes the agent caught lying with a fine $F^i_{t',t}(\bi) >0$. (So $p^i_{t',t}(\bi)$ denotes the probability that the verification has \emph{not} worked -- clearly, $p^i_{t,t}(\bi) =0$.) We drop $i$ from the notation when this is clear from the context.
We follow the literature and assume that verification occurs after the outcome has been computed and executed; therefore verification probabilities $p^i_{t',t}(\bi)$ and fines $F^i_{t',t}(\bi)$ are actually also functions of $\M(t', \bi)$. Moreover, we assume that when a mechanism with probabilistic verification $\M$ catches agent $i$ lying it acquires knowledge of $t(\M(t', \bi))$.
In some applications, the verification might actually be more powerful and reveal more information about $t$ (see Section \ref{sec:full} for an example).

Except for the fines, the mechanism does not use any other form of transfers. In this sense, our research extends the literature on mechanisms that trade money with verification to ensure incentive-compatibility (see, e.g., \citep{FotakisKV17,FerraioliSV16}).
When misreporting her type to a mechanism with probabilistic verification, agent $i$ will then have a valuation
$$
t(\M(t', \bi)) - (1-p^i_{t',t}(\bi)) F^i_{t',t}(\bi).
$$
There are two complementary interpretations of this formula, depending on the power of the verification device that the mechanism designer has at her disposal.
The first assumes that the verification device is faulty (e.g., subject to measurement errors) and even if $i$ is verified there is a chance that depends on type, bid and what the others reported that she is not caught (e.g., error might depend on the ``distance'' between $t$ and $t'$). The second, instead, is closer to the selective verification of \cite{FotakisTZ16} in that the device is faultless and once an agent is selected to be verified by the mechanism she will be fined with certainty if she lied. Naturally, as the mechanism has no knowledge of $t$, the probability with which the mechanism selects agent $i$ for verification can only depend, in this case, on her identity, report and bids of the others but not on her type $t$; i.e., in the above formula $p^i_{t',t}(\bi)$ reads $p^i_{t'}(\bi)$.

We will consider two different categories of mechanisms with probabilistic verification: the full model wherein all the agents are verifiable, so that we can define $p^i_{t',t}(\bi) \in [0,1]$ for every tuple $(i, t, t',\bi)$, and the partial model wherein there exists at least one agent $i$ that is not verifiable, that is, for which we require $p^i_{t',t}(\bi)=1$ for every $\bi$ and every $t, t'$ with $t \neq t'$. The non-verifiable agents might be given a priori (e.g., agents without GPS loggers) or be determined by the designer's limited resources (e.g., with $k$ out of $n$ loggers available to allocate, there would remain $n-k$ non-verifiable agents).

Our model generalizes both the aforementioned models of \cite{ec12} and \cite{PenVen09}. The latter only focuses on degenerate \emph{ex-post} verification probabilities (i.e., verification that depends on the outcome computed by the mechanism) defined, for example, as follows:
\[
p^i_{t',t}(\bi) = \left\{ \begin{array}{ll}
1 & \textrm{if } t(\M(t', \bi)) \geq t'(\M(t', \bi))\\
0 & \textrm{otherwise}
\end{array}\right.
.
\]
Moreover, the only fines imposed in \citep{PenVen09}, for mechanisms with money, is the payment that the mechanism would have awarded to the player caught lying. Subsequent literature, such as, \citep{KV15,FotakisKV17,FerraioliSV16}, instead assumes that the fines are large enough to make a verified lie very costly to the agents. We here maintain the ex-post nature of verification, but we generalize both the definitions of fines and verification probabilities. The model in \citep{ec12} is closer to ours in that probabilities and fines are not degenerate and restricted, respectively. However, \citet{ec12} use \emph{ex-ante} verification and fines, that is, verification probabilities and fines that do not depend on the outcome computed by the mechanism. In particular, for ex-ante verification, the probabilities are type-only, i.e., $p^i_{t',t}(\bi)$ only depends on $t$ and $t'$ and not on $\M(t', \bi)$; furthermore, in general, the verification step does not reveal any further information on the type/valuation of a lying agent. Our example of a GPS logger in the introduction fits this type-only model as the probability of being caught only depends on the distance between $t$ and $t'$ but not on where the mechanism locates the facility and how much the agent likes a certain outcome. On the one hand, this model is a special case of ours as we can always disregard the outcome in our definition of the $p$'s and the $F$'s. On the other hand, the ex-ante paradigm is pretty limiting. For example, a mechanism with ex-ante fines and ex-ante verification might need higher fines to ensure OSP (cf. our definition of fines in Propositions \ref{prop:fixProb} and \ref{prop:full_model}).  

\section{Full probabilistic verification}\label{sec:full}
In this section we prove that full probabilistic verification is very powerful.
Specifically, let $C$ be the random variable that denotes the expected number of agents that the mechanism is able to catch lying,
i.e. $C = \sum_{i=1}^n C_i$, where $C_i = 1$ if agent $i$ is caught lying (thus with probability $1-p^i_{t',t}(\bi)$), and $0$ otherwise.
We then prove the following theorem.
\begin{theorem}
\label{thm:positive}
 If the domains of agents are bounded, then for every social choice function $f$ there is an OSP mechanism with full probabilistic verification that implements $f$ and verifies in expectation only a constant number of agents, i.e., $E[C]=O(1)$.
\end{theorem}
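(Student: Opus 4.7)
The plan is to construct a simple sequential direct-revelation mechanism augmented with end-of-mechanism verification. Agents are queried one by one in a fixed order; at her unique information set each agent $i$ reports some $b_i \in D_i$; once the vector $\b = (b_1, \ldots, b_n)$ has been gathered the mechanism outputs $f(\b)$; and finally, independently for each $i$, the verification device is activated with a fixed probability $q_i$, collecting a fine $F_i$ from $i$ whenever it fires on a lying agent. In the notation of Section~\ref{sec:preli} this corresponds to setting $1 - p^i_{b'_i, t_i}(\bi) = q_i$ for every lie $b'_i \neq t_i$ and every $\bi$, with $q_i$ and $F_i$ both independent of the reported profile or of $\M(\b)$.

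Obvious dominance of truth-telling then reduces to a one-line estimate. Fix agent $i$ with true type $t_i$ and any alternative strategy $S'_i$ signalling a lie $b'_i \neq t_i$. Agent $i$ acts only once and no chance move precedes her action, so the earliest point of departure of $\mathbf{S}_i(t_i)$ and $S'_i$ is the single information set at which she moves, and the witnesses $W(I_i, \mathbf{S}_i(t_i), S'_i)$ range over all continuations $\bi$ compatible with the predecessors' reports. Using the paper's valuation expression $t_i(\M(b'_i, \bi)) - (1 - p^i)F_i = t_i(\M(b'_i, \bi)) - q_i F_i$ for the lie and the boundedness of the domain yields $\inf_{\bi} t_i(\M(t_i, \bi)) \geq \lmin$ and $\sup_{\bi} [t_i(\M(b'_i, \bi)) - q_i F_i] \leq \lmax - q_i F_i$. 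The OSP condition therefore collapses to the single per-agent inequality $q_i F_i \geq \lmax - \lmin$.

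The remaining freedom in the pair $(q_i, F_i)$ is used to make $E[C] = \sum_i q_i$ small. A convenient choice is $q_i = 2^{-i}$ together with $F_i = 2^i(\lmax - \lmin)$, which keeps the OSP constraint tight and gives $E[C] < 2 = O(1)$; uniform alternatives like $q_i = 1/n$, $F_i = n(\lmax - \lmin)$ yield $E[C] = 1$ at the price of fines that grow linearly with $n$, realising the trade-off between fine size and verification budget highlighted in the introduction. Note that no fine is ever collected on the truthful equilibrium path, so the mechanism uses no monetary transfers in equilibrium.

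The main obstacle I foresee is not analytic but formal: one has to line up carefully the definition of $W(I_i, S_i, S'_i)$ with the specific extensive form used above, and check that, precisely because no chance move precedes any agent's action and each agent is queried exactly once, this witness set is indeed the full set of continuations $\bi$. Once this is pinned down, the one-line estimate $\lmin \geq \lmax - q_i F_i$ closes the argument for every social choice function $f$ without invoking any structural property of $f$ beyond the bounded-domain hypothesis.
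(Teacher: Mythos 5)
Your proof is correct and follows essentially the same route as the paper's: both reduce obvious dominance to the per-agent inequality $(1-p^i)F_i \geq \lmax - \lmin$ via the $[\lmin,\lmax]$ bounds and then choose fines large enough that the verification probabilities $1-p^i$ sum to $O(1)$ (the paper takes $F_i = \gamma(\lmax - \lmin)$ with $\gamma = \Omega(n)$, which is your uniform choice). The only cosmetic differences are sequential rather than concurrent revelation and the additional geometric alternative $q_i=2^{-i}$, $F_i = 2^i(\lmax-\lmin)$, which trades an exponentially larger maximum fine for $E[C]<2$ without tuning a parameter to $n$.
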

Hence, there are social choice functions for which an OSP mechanism is implementable in the full probabilistic verification model but not implementable in the standard model with money (e.g., facility location \citep{FerraioliV17}).

Unfortunately, the mechanism that proves Theorem~\ref{thm:positive} needs very large fines. However, we prove that full probabilistic verification still turns out to be a powerful tool even if large fines are not available.
In particular, we observe that the proof of Theorem~\ref{thm:positive} gives 
a trade-off between fines and the number of verified agents.
Hence, one may be able to work with lower fines, by simply having more accurate verification
(in a sense, we can reduce fines only if we spend more for our verification tools).
Hence, we can show that if a lower bound on the fines is given, it is possible, under opportune conditions,
to compute verification probabilities such that the resulting mechanism with full probabilistic verification is OSP.
We also consider the opposite direction. That is, we assume that verification probabilities are given,
and investigate the lowest fines that one needs to set in order to have an OSP mechanism.
Hence, our results actually consider all the following possible cases:
(i) we are given bounded fines, and we need to decide how faulty we can allow the verification device to be in order to have an OSP mechanism (cf. Lemma~\ref{lem:fixFines}); (ii) the faultiness of the verification device is given, and we need to set fines for the mechanism to be OSP (cf. Proposition~\ref{prop:fixProb}); (iii) we require that the expected number of verified agents is limited, and we design a corresponding OSP mechanism, by computing both the necessary faultiness of the verification device and the fines (cf. Theorem~\ref{thm:positive}).

Whereas this bound can be in general very large,
we show that there are some applications (including, e.g., facility location) in which very large fines may be unnecessary.

\paragraph{OSP mechanism with few verified agents.}
For every $i$ with true type $t$, let $F^i_{\arg} = \arg_{t,t',\bi} F_{t',t}^i(\bi)$, with $\arg \in \{\min, \max\}$. 
\begin{lemma}
\label{lem:fixFines}
 For every social choice function $f$ and fines $F^i_{t',t}(\bi)$ such that $F_{\min}^i \geq \lmax - \lmin$ for all $i$,
 let $\M_F$ be the mechanism with full probabilistic verification that
 requires agents to directly reveal their type concurrently,
 implements $f$, uses fines $F^i_{t',t}(\bi)$,
 and sets $p_{t',t}^i(\bi) = \frac{\lmin-\lmax+F^i_{\min}}{F^i_{\max}}$.
 Then
 $\M_F$ is OSP.
\end{lemma}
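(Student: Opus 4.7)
The plan is to exploit the very simple structure of $\M_F$: it is a direct revelation mechanism in which agents act concurrently, so the extensive game form has a single information set $I_i$ for each agent $i$, reached at the root regardless of anyone's action. Thus, for any two strategies $\mathbf{S}_i(t)$ and $\mathbf{S}_i(t')$ of agent $i$ with $t\neq t'$, the earliest point of departure $\alpha(\mathbf{S}_i(t),\mathbf{S}_i(t'))$ is precisely this single $I_i$, and every profile $\bi$ witnesses it. Consequently, the obvious dominance condition collapses to proving, for each $i$, each $t\in D_i$ and each $t'\neq t$,
\[
\inf_{\bi}\,t(\M_F(t,\bi))\;\geq\;\sup_{\bi}\Bigl[t(\M_F(t',\bi))-(1-p^i_{t',t}(\bi))\,F^i_{t',t}(\bi)\Bigr].
\]

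Before attacking this inequality I would first verify that the prescribed value $p^i_{t',t}(\bi)=(\lmin-\lmax+F^i_{\min})/F^i_{\max}$ really lies in $[0,1]$: the hypothesis $F^i_{\min}\geq\lmax-\lmin$ immediately gives $p^i_{t',t}(\bi)\geq 0$, while $\lmin\leq\lmax$ forces the numerator to be at most $F^i_{\min}\leq F^i_{\max}$, whence $p^i_{t',t}(\bi)\leq 1$. Note that this value does not depend on $t,t',\bi$, so it factors out of any worst-case bound.

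For the key inequality I would use the boundedness of the domains: the left-hand side is at least $\lmin$, and because $1-p^i_{t',t}(\bi)\geq 0$ and $F^i_{t',t}(\bi)\geq F^i_{\min}$, the right-hand side is at most $\lmax-(1-p^i_{t',t}(\bi))\,F^i_{\min}$. So it suffices to show $(1-p^i_{t',t}(\bi))\,F^i_{\min}\geq\lmax-\lmin$. Substituting the explicit $p^i_{t',t}(\bi)$ and clearing the denominator $F^i_{\max}$, this reduces to the algebraic identity
\[
F^i_{\min}\bigl(F^i_{\max}-F^i_{\min}+\lmax-\lmin\bigr)-F^i_{\max}(\lmax-\lmin)=(F^i_{\max}-F^i_{\min})\bigl(F^i_{\min}-(\lmax-\lmin)\bigr),
\]
whose right side is a product of two non-negative factors: $F^i_{\max}\geq F^i_{\min}$ trivially, and $F^i_{\min}\geq\lmax-\lmin$ by hypothesis. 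This closes the argument.

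The only delicate step is the factorization above: the choice of $p^i_{t',t}(\bi)$ is calibrated so that the bound becomes tight precisely when $F^i_{\min}$ equals the spread $\lmax-\lmin$, and it is this structural coincidence (rather than any combinatorial subtlety) that drives the proof. Everything else is bookkeeping---unwinding the OSP definition for a concurrent direct revelation game and absorbing the expected penalty $(1-p)F$ into a worst-case bound on the utility from lying.
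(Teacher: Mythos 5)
Your proposal is correct and follows essentially the same route as the paper: check that $p^i_{t',t}(\bi)$ is a valid probability, then show the worst-case truthful utility $\lmin$ dominates the best-case lying utility $\lmax-(1-p^i_{t',t}(\bi))F^i_{t',t}(\bi)$ by establishing $(1-p^i_{t',t}(\bi))F^i_{t',t}(\bi)\geq\lmax-\lmin$. The only cosmetic difference is that you verify this key inequality by clearing denominators and factoring, whereas the paper bounds $1-p^i_{t',t}(\bi)\geq(\lmax-\lmin)/F^i_{t',t}(\bi)$ via the elementary fact $\frac{C+x}{C+y}\geq\frac{x}{y}$ for $x\leq y$, $C\geq 0$ --- the same computation in different clothing.
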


\begin{proof}
First observe that the mechanism is well defined since, by hypothesis,
 $F_{\min}^i \geq \lmax - \lmin$ implies $p_{t',t}^i(\bi) \geq 0$ for every $(i,t,t',\bi)$. Moreover, since $\lmin < \lmax$ and $F^i_{\min} \leq {F^i_{\max}}$ then we get  $p_{t',t}^i(\bi) < 1$ for every $(i,t,t',\bi)$.

 For OSP-ness, fix agent $i$. Since $p_{t',t}^i(\bi') < 1$ for each $(t,t',\bi')$, then
\begin{align*}
   1-p_{t',t}(\bi') & = 1 - \frac{\lmin-\lmax+F^i_{\min}}{F^i_{\max}}\\
   & = \frac{(F^i_{\max} - F^i_{\min}) + (\lmax-\lmin)}{(F^i_{\max} - F^i_{\min}) + F^i_{\min}}\\
   & \geq \frac{\lmax-\lmin}{F^i_{\min}} \geq \frac{\lmax-\lmin}{F^i_{t',t}(\bi')},
  \end{align*}
where the first inequality follows since, for every $C \geq 0$, $\frac{C+x}{C+y} \geq \frac{x}{y}$ whenever $x \leq y$. The lemma then follows since, for every $\bi$,
\begin{align*}
 	t(\M_F(t, \bi)) & \geq \lmin = \lmax - (1-p_{t',t}(\bi')) \frac{\lmax-\lmin}{1-p_{t',t}(\bi')}\\
 	& \geq \lmax - (1-p_{t',t}(\bi')) F^i_{t',t}(\bi')\\
 	& \geq t(\M_F(t', \bi')) - (1-p_{t',t}(\bi')) F^i_{t',t}(\bi'). \tag*{\qed}
 	\end{align*}
\let\qed\relax
\end{proof}
Lemma~\ref{lem:fixFines} allows us to understand how the choice of $F^i_{\min}$ and $F^i_{\max}$ changes the probabilities.
In particular, how can we minimize the amount of verification needed or, equivalently, maximize $p_{t',t}^i$? 
The first observation is that the probability $p_{t',t}^i$ is higher when $F^i_{\min} = F^i_{\max}=F$.
The subsequent observation is then that the this probability quickly grows to $1$ as $F$ increases
from $\lmax - \lmin$ 
(simply look at the derivative of $(-\lmax + \lmin + x)/x$):
this shows that according to the choice of fines, there is a sort of all-or-nothing verification (see Figure \ref{fig:01ver}).

\begin{figure*}[t]
	\includegraphics[scale=0.21]{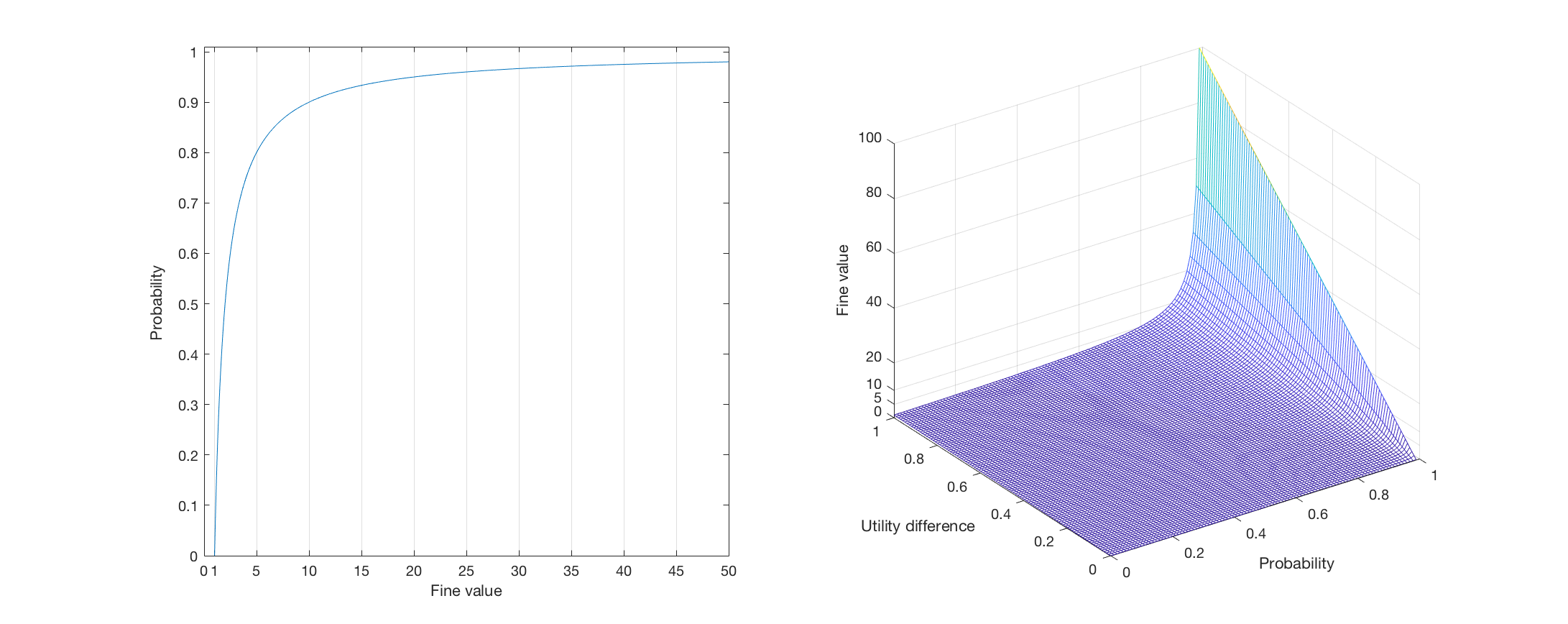} 
	\caption{Assume that $\lmin=0, \lmax=1$. Left figure shows that the value of $p^i_{t',t}(\bi)$ quickly grows to $1$ as the value of the fines increases: a fine smaller than $1$ does not help at all and we need faultless verification. Right figure shows the value of the fines as a function of the difference  $t(\M(t', \bi))-\lmin$ and $p_{\max}^i$.}\label{fig:01ver}
\end{figure*}

We are now ready to prove Theorem~\ref{thm:positive}.
\begin{proof}[\underline{Proof of Theorem~\ref{thm:positive}}]
Set, for each $(i,t,t',\bi)$, $F^i_{t',t}(\bi)=\gamma(\lmax-\lmin)$ for some $\gamma > 1$ that will be fixed later,
and let $\M_F$ be the mechanism with full probabilistic verification that
requires agents to  to directly reveal their type concurrently, implements $f$, sets $p_{t',t}^i(\bi) = \frac{\lmin-\lmax+F^i_{\min}}{F^i_{\max}}$,
and uses fines $F^i_{t',t}(\bi)$.
Since $F_{\min}^i = F^i_{\max} = F^i_{t',t}(\bi) \geq \lmax - \lmin$, then, according to Lemma~\ref{lem:fixFines}, $\M_F$ is OSP.

Moreover, given $\bi$ and fines as above, this mechanism verifies
\begin{align*}
\sum_{i=1}^n (1 - p_{b_i,t_i}^i(\bi)) & = n - \sum_{i=1}^n \frac{\lmin-\lmax+F^i_{\min}}{F^i_{\max}}\\
 & = n - \sum_{i=1}^n \left(1 - \frac{1}{\gamma}\right) = \frac{n}{\gamma}.
\end{align*}
The theorem then follows by taking $\gamma = \Omega(n)$.
\end{proof}

\paragraph{OSP-ness for given verification probabilities.}
Fix $i$ and let $t$ be her true type.
Let us denote $p_{\max}^i$ as $\max_{\bi, b \in D_i} p_{b,t}(\bi)$.
We prove the following result.
\begin{proposition}
\label{prop:fixProb}
	For every social choice function $f$ and verification probabilities $p$ such that $p^i_{t',t}(\bi)\neq 1$ for all $(i,\bi,t,t')$,
	let $\M_p$ be the mechanism with full probabilistic verification that
	requires agents to directly reveal their type concurrently, 
	implements $f$, uses fines $F^i_{t',t}(\bi)\geq\frac{t(\M(t', \bi))-\lmin}{1-p_{\max}^i}$,
	and verification probabilities $p$.
	Then $\M_p$ is OSP.
\end{proposition}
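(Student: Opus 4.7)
The plan is to mirror the argument used for Lemma~\ref{lem:fixFines}, exploiting the fact that the mechanism is direct revelation with concurrent announcements. This structural choice means each agent $i$ has a single information set (at the root of the game tree), so the earliest point of departure $\alpha(\mathbf{S}_i(t),\mathbf{S}_i(t'))$ between the truthful strategy and any deviation $t'\neq t$ consists of that unique information set, and the witness set $W$ ranges over \emph{all} profiles $\bi$ of types reported by the other agents. Consequently, obvious dominance for agent $i$ of true type $t$ reduces to showing
$$
\inf_{\bi}\, t(\M_p(t,\bi)) \;\geq\; \sup_{\bi'}\, \Big[\, t(\M_p(t',\bi')) - \big(1-p^i_{t',t}(\bi')\big)\,F^i_{t',t}(\bi') \,\Big]
$$
for every $t'\in D_i$ (where the utility of a truthful report entails no fine since $p^i_{t,t}=0$).

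The left-hand side is at least $\lmin$ by the boundedness of the domains. For the right-hand side, I would bound the fine-discounted utility using the hypothesis of the proposition together with the definition of $p^i_{\max}$. Since $p^i_{t',t}(\bi')\leq p^i_{\max}<1$, we have $1-p^i_{t',t}(\bi')\geq 1-p^i_{\max}>0$, so multiplying the inequality $F^i_{t',t}(\bi')\geq \frac{t(\M_p(t',\bi'))-\lmin}{1-p^i_{\max}}$ by $1-p^i_{t',t}(\bi')$ yields
$$
\big(1-p^i_{t',t}(\bi')\big)\,F^i_{t',t}(\bi') \;\geq\; t(\M_p(t',\bi'))-\lmin.
$$
Rearranging gives $t(\M_p(t',\bi')) - (1-p^i_{t',t}(\bi'))\,F^i_{t',t}(\bi') \leq \lmin$ uniformly in $\bi'$, so the supremum on the right is at most $\lmin$, which matches the lower bound on the left-hand side. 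This establishes obvious dominance of truth-telling.

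The argument is essentially calculational; the only mildly delicate step is recognizing that $p^i_{\max}$ is taken over all reports $b\in D_i$ and all $\bi$ for the fixed true type $t$, which is exactly the supremum needed to uniformly control $1-p^i_{t',t}(\bi')$ across the witness set of the single information set. The hypothesis $p^i_{t',t}(\bi)\neq 1$ ensures both that $p^i_{\max}<1$ (so the denominator in the fine bound is well-defined and positive) and that the fines can be chosen as finite values. Beyond this, no further subtlety arises, since the concurrent direct-revelation format eliminates any issue with multiple information sets or sequential reasoning that would otherwise complicate the obvious-dominance condition.
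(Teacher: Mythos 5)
Your proposal is correct and follows essentially the same route as the paper: the paper's proof is exactly the chain $\inf_{\bi} t(\M_p(t,\bi)) \geq \lmin \geq \sup_{\bi}\{t(\M_p(t',\bi)) - (1-p_{t',t}(\bi))F^i_{t',t}(\bi)\}$, obtained by using $1-p^i_{t',t}(\bi) \geq 1-p^i_{\max}$ against the fine bound, which is your calculation. (One immaterial slip: the truthful report incurs no fine because there is no lie to catch, not because $p^i_{t,t}=0$ --- indeed $p^i_{t,t}=0$ means a truthful agent is verified with probability one.)
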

\begin{proof}
	Fix $i$.
	Since $p_{\max}^i\neq 1$, then
	\begin{align*}
	\inf_{\bi} t(\M_p(t, \bi)) & \geq \lmin \\
	& \geq \sup_{\bi} \left\{t(\M_p(t', \bi)) - (1-p_{t',t}(\bi)) \frac{t(\M_p(t', \bi))-\lmin}{1-p_{\max}^i}\right\}\\
	& \geq \sup_{\bi} \left\{\rule{0ex}{3ex} t(\M_p(t', \bi))-(1-p_{t',t}(\bi)) F_{t',t}(\bi)\right\}.\tag*{\qed}
	\end{align*}
	\let\qed\relax
\end{proof}

A close inspection to the proof reveals that our lower bounds on fines are tight, i.e., for any $t, t'$, $i$ and $\bi$ there is no smaller $F_{t',t}^i(\bi)$ that would guarantee OSP. This in particular means that once the probabilities to verify have been set there is not much flexibility in the fines imposed on agents. Lower fines would only be possible whenever the verification step would reveal not just the valuation for the implemented outcome but more information about $t$ and, consequently, more knowledge about the tuple $(t(x))_{x\in \out}$. For example, in facility location, once we learn $c=t(\M(t', \bi))$ through verification, as we know that the location of the facility is $f=\M((t', \bi)$, we can conclude that the type $t$ of the agent belongs to $D'=\{f-c, f+c\}$. When verification allows to know that $t \in D' \subset D$ we say that it is \emph{$D'$-revealing}. Let $D_{-i}$ denote $\bigtimes_{j \neq i} D_j$. Proposition~\ref{prop:fixProb} can be adapted to prove the following one, where $\lmin' = \inf_{t\in D', \bi \in D_{-i}} \{t(\M(t, \bi))\}$.
\begin{proposition}
\label{prop:full_model}
	For every social choice function $f$ and verification probabilities $p$ such that $p^i_{t',t}(\bi)\neq 1$ for all $(i,\bi,t,t')$,
	let $\hat{\M}_p$ be the mechanism with full probabilistic $D'$-revealing verification that
	requires agents to directly reveal their type concurrently, 
	implements $f$, uses fines $\hat{F}^i_{t',t}(\bi)\geq \frac{t(\M(t', \bi))-\lmin'}{1-p_{\max}^i}$,
	and verification probabilities $p$.
	Then $\hat{\M}_p$ is OSP.
\end{proposition}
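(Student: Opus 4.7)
The plan is to mirror the three-line chain in the proof of Proposition~\ref{prop:fixProb} verbatim, with the single substitution of $\lmin'$ for $\lmin$. The only new conceptual ingredient is the meaning of $D'$-revealing verification: by assumption, the verification reveals that the true type $t$ lies in $D'$, which in particular means that $t\in D'$ whenever we reason about the honest play of agent $i$. This is the key fact that lets us upgrade the trivial lower bound $t(\hat{\M}_p(t,\bi))\geq\lmin$ (used in Proposition~\ref{prop:fixProb}) to the tighter bound
\[
\inf_{\bi\in D_{-i}} t(\hat{\M}_p(t,\bi)) \;\geq\; \inf_{t''\in D',\,\bi'\in D_{-i}} t''(\hat{\M}_p(t'',\bi')) \;=\; \lmin',
\]
which is what permits the correspondingly lower fine in the statement.

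With this observation in hand, I would fix an agent $i$ with true type $t$, an alternative report $t'$, and any $\bi$, and reproduce the chain
\begin{align*}
\inf_{\bi} t(\hat{\M}_p(t,\bi)) &\geq \lmin' \\
&\geq \sup_{\bi}\left\{ t(\hat{\M}_p(t',\bi)) - (1-p_{t',t}(\bi))\tfrac{t(\hat{\M}_p(t',\bi))-\lmin'}{1-p_{\max}^i}\right\} \\
&\geq \sup_{\bi}\left\{ t(\hat{\M}_p(t',\bi)) - (1-p_{t',t}(\bi))\hat{F}^i_{t',t}(\bi)\right\}.
\end{align*}
The middle step rearranges to $(1-p_{t',t}(\bi))/(1-p^i_{\max})\geq 1$ multiplied by $t(\hat{\M}_p(t',\bi))-\lmin'$. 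When the latter is non-negative, the needed inequality follows from $p_{t',t}(\bi)\leq p^i_{\max}<1$ (which holds by definition of $p^i_{\max}$ together with the hypothesis $p^i_{t',t}(\bi)\neq 1$); when it is negative, the inequality is trivial because the fine-penalty term is non-negative. The final step is the lower bound on $\hat{F}^i_{t',t}(\bi)$ assumed in the statement, combined with $(1-p_{t',t}(\bi))\geq 0$.

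The main obstacle, more interpretive than technical, is that $D'$ may itself depend on the lie $t'$ and the realized outcome (as in the facility-location example with $D'=\{f-c,f+c\}$). To keep the writeup clean I would either take $D'$ to be the worst (largest) set over all lying scenarios so that $\lmin'$ is a single constant of the mechanism, or explicitly parametrize $D'$ and $\lmin'$ by $(t',\bi)$ inside the $\sup$; in either reading, the crucial invariant $t\in D'$ is preserved by the definition of $D'$-revealing verification, and the chain above goes through unchanged.
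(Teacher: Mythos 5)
Your proposal is correct and follows exactly the adaptation of Proposition~\ref{prop:fixProb} that the paper has in mind (the paper itself omits a written proof, stating only that Proposition~\ref{prop:fixProb} ``can be adapted''). The key new ingredient you identify — that $D'$-revealing verification guarantees $t\in D'$, so $\inf_{\bi}t(\hat{\M}_p(t,\bi))\geq\lmin'$ because $t$ is one of the indices ranged over in the infimum defining $\lmin'$ — is precisely the justification needed, and you also correctly handle the sign subtlety that does not arise in Proposition~\ref{prop:fixProb} (there $t(\M_p(t',\bi))-\lmin\geq0$ always, whereas here $t(\hat{\M}_p(t',\bi))-\lmin'$ can be negative, in which case the displayed middle inequality as written fails, but the OSP conclusion holds trivially since the actual fine is positive and the penalty term is therefore non-negative). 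One small presentational remark: since the chain's middle step genuinely fails in the negative case, it would be cleaner to restrict the chain to the case $t(\hat{\M}_p(t',\bi))\geq\lmin'$ and treat the other case entirely outside the chain, rather than writing the chain as if it held universally and then flagging the exception afterward.
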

Since $\lmin' \geq \lmin$, we have that $\hat{F}^i_{t',t}(\bi) \leq F^i_{t',t}(\bi)$, and thus lower fines are sufficient to implement OSP mechanisms with completely revealing verification.

\section{Partial probabilistic verification}
\label{sec:partial}
One limitation of Theorem~\ref{thm:positive} is that the fines have a very high value (linear in the number of agents) which makes their enforceability doubtful. Looking at the proof, if the mechanism were able to verify a constant number of agents,
then the summation would only have a constant number of addends, $\gamma$ would in turn become a constant and, ultimately, the fines would be reduced significantly.
In this section, we thus investigate whether it is possible to obtain OSP mechanisms that verify few agents.

Specifically, we let $V(\b)$ denote the subset of verifiable agents for type profile $\b$.
Such a subset, just like the outcome, can be chosen randomly and can depend on the declaration of the agents.
Note that $V(\b)=n$ in the full probabilistic verification model.
Here, we consider partial probabilistic verification, and we ask how large should $V(\b)$ be in order to have an OSP mechanism.

As in Proposition~\ref{prop:fixProb}, for bounded domains we can guarantee through fines that, no matter the quality of the verification device, truthtelling will be obviously dominant for all the agents in $V(\b)$. Therefore, the mechanism ``only'' needs to obviously incentivize the agents that are not in $V(\b)$.
We next prove the number of these agents needs to be small for any OSP mechanism with partial probabilistic verification, even in the case in which the designer can choose the agents to verify in $V(\b)$. In fact, we next show that for every $\varepsilon > 0$ there is a problem for which every $\varepsilon$-OSP mechanism needs to verify at least $n-o(n)$ agents,
where $n$ is the total number of agents.
We also show that this result continues to hold even in the Bayesian setting and if one requires to solve the problem only asymptotically.
Finally, we prove that this bound is essentially tight: indeed, for every $\varepsilon > 0$ there is an $\varepsilon$-OSP in expectation mechanism that is able to implement asymptotically every social choice function, by verifying at most $n-o(n)$ agents.
Moreover, if the agents may be sometimes \emph{imposed} to adopt a given ``reaction'' to the output of the mechanism, then there is also an OSP in expectation mechanism with the same features.

\paragraph{How many agents must be verified?}
Consider the following problem, known as \emph{public project} problem. We need to decide whether to implement or not a public project (e.g., building a bridge) whose cost is $c$. The society is comprised of a set $N$ of $n$ individuals (also termed agents or customers) that we index with integers from $0$ to $n-1$. The valuation of agent $i$ if the project is implemented may be either $v_i(1) = 1$ or $v_i(1) = \delta > 0$, where, $\delta \ll 1$ (e.g., $\delta = \frac{1}{n^2}$). We say that the type of $i$ is $1$ in the first case, and $\delta$ otherwise. Moreover, each agent has valuation $v_i(0) = 0$ if the project is not implemented.
The designer would like to implement the project only if at least $c$ individuals have type $1$. In other words, the designer would like to
implement the \emph{public project function} $f$ that returns $1$ if $\sum_i v_i(1) \geq c$, and $0$ otherwise.
This has been introduced by \cite{jackson1992implementing} and it is a basic and very well studied problem in economics and computer science (see, e.g., \citep{AE09} and references therein).

{Consider an $\varepsilon$-OSP mechanism $\M$ that implements the public project function $f$ and let $\tp$ be a type profile.
Observe that $t_i$ can assume only two values: $1$ and $\delta$.
Given type $t_i$ we will denote with $\overline{t_i}$ the unique alternative type.
Let $(\mathbf{S}_i(t_i))_{i \in N}$ be the profile of $\varepsilon$-obviously dominant strategies such that $\M((\mathbf{S}_i(t_i))_{i}) = f(\tp)$
if the mechanism is deterministic, and $E_{d_c \sim \delta_c}[\M(d_c,(\mathbf{S}_i(t_i))_{i})] = f(\tp)$ otherwise.
Given $\M$, $\tp$, a realization $d_c$ of $\delta_c$ and strategies $(\mathbf{S}_i(t_i))_i$, let $\rho(i)$ be the first history $h$, if any,
such that $h$ can be reached when the mechanism is played according to $(d_c, (\mathbf{S}_i(t_i))_i)$, $P(h) = i$,
and every action $a \in \mathcal{A}(h) \setminus \mathbf{S}_i(t_i)(h)$ belongs to $\mathbf{S}_i(\overline{t_i})(h)$.
In other words, $\rho(i)$ is the first node, if any, in the path of the arborescence defined by $\M$, $d_c$ and strategies $\mathbf{S}_i(t_i)$,
such that the agent $i$ called to take an action at that node that either signals $t_i$ or signals the alternative type $\overline{t_i}$.
For this reason, we say that agent $i$ \emph{reveals} her type at $\rho(i)$. Note that, in general, $\rho(i)$ might not exist for all the agents $i$, profiles $\tp$, realizations $d_c$ and strategies $(\mathbf{S}_i(t_i))_i$; for these agents, we let $\rho(i)$ be a dummy history $h^*$ such that $h^* \nprec h$ for all histories $h$ defined by $\M$ and no agent takes actions at $h^*$.
We will denote with $n_i$ and $k_i$ the number of agents that revealed their types,
and the ones that declared to have type $1$, before $i$ reveals her type, respectively.
That is, $n_i = \left|\left\{j \colon \rho(j) \prec \rho(i)\right\}\right|$ and $k_i = \left|\left\{j \colon \rho(j) \prec \rho(i) \text{ and } t_j = 1\right\}\right|$.}

We now show a key property of this problem,
that is enjoyed by every $\varepsilon$-OSP mechanism for this problem and for every type profile.
\begin{lemma}
\label{lem:ver_cond}
 For every $\varepsilon \in [0,1)$ {every type profile $\tp$, and every realization $d_c$ of $\delta_c$},
 an $\varepsilon$-OSP mechanism that implements the public project function
 has to verify every customer $i$ with $n_i \leq n+k_i-c-1$ and $k_i \leq c-2$.
 Moreover, if {$i$ takes an action according to $\mathbf{S}_i(1)$ at $\rho(i)$},
 then $i$ must be verified even if $n_i = n+k_i-c$ or $k_i = c-1$.
\end{lemma}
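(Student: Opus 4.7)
The plan is a proof by contradiction. I assume $\M$ is $\varepsilon$-OSP, implements $f$, and yet some customer $i$ meeting the stated conditions is not verified, so $p^i \equiv 1$ and no fine can be imposed on $i$ for the relevant lie. I will exhibit two elements of $\widehat{W}(\rho(i), \mathbf{S}_i(t_i), \mathbf{S}_i(\overline{t_i}))$ under which the $\varepsilon$-obvious-dominance inequality at $\rho(i)$ collapses, yielding the contradiction. The starting structural remark I would establish is that any $(S_{-i}, d_c) \in \widehat{W}$ coincides with the truthful play on every action taken strictly before $\rho(i)$: the one-to-one correspondence between actions and successors, together with $\rho(i)$ being $i$'s first divergence point, forces each agent $j$ with $\rho(j) \prec \rho(i)$ to signal exactly her true type; in particular $k_i$ predecessors contribute a $1$-signal and $n_i - k_i$ a $\delta$-signal. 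After $\rho(i)$ the strategies of the remaining $n - 1 - n_i$ agents are free, and taking them to be type-like ($\mathbf{S}_j(b_j)$ for some $b_j$) forces the outcome to equal $f$ at the aggregate signal profile, giving me precise control on $i$'s utility.

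For the first claim I would take $t_i = 1$. I pick a continuation in which all remaining agents play $\mathbf{S}_j(\delta)$: under $i$'s truthful signal of $1$, the count of $1$-signals is $k_i + 1 \leq c - 1 < c$, so $f = 0$ and $t_i$-utility is $0$. I then pick another in which they all play $\mathbf{S}_j(1)$: under $i$'s lie (via $\mathbf{S}_i(\delta)$), the count becomes $k_i + (n - 1 - n_i) \geq c$ by $n_i \leq n + k_i - c - 1$, so $f = 1$ and $t_i$-utility is $1$ (no fine, since $i$ is unverified). Obvious dominance at $\rho(i)$ then degenerates to $0 \geq 1 - \varepsilon$, contradicting $\varepsilon < 1$.

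For the moreover, I pair $\tp$ with the mirror profile $\tp'$ that flips $t_i$ from $1$ to $\delta$. Since $\mathbf{S}_i(1)$ and $\mathbf{S}_i(\delta)$ agree before $\rho(i)$, the path to $\rho(i)$ is the same in the truthful executions of $\tp$ and $\tp'$, and the signal profile reached when $i$ lies to $1$ in $\tp'$ equals the truthful signal profile of $\tp$; because $V$ is a function of signals, the hypothesized non-verification of $i$ in $\tp$ transfers to the lying execution in $\tp'$. Under either boundary $n_i = n + k_i - c$ (with $k_i \leq c - 2$) or $k_i = c - 1$ (with $n_i \leq n + k_i - c - 1$), the same worst-truth/best-lie construction, now applied at $\rho(i)$ in $\tp'$, drives the $1$-signal count across $c$ on the lie branch while keeping it strictly below $c$ on the truth branch; the resulting positive gap violates $\varepsilon$-OSP in $\tp'$ and forces $i$ to be verified. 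The hard part will be the careful bookkeeping: checking that the boundary equalities, tight by one unit, are exactly what is needed so that the two constructed continuations straddle the threshold $c$, and that the shared verification regime really does link the two profiles as claimed.
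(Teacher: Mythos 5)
Your overall strategy is the paper's: at the revelation point $\rho(i)$ compare the worst continuation under the truthful signal (outcome $0$, value $0$) with the best continuation under the lie (the count of $1$-signals crosses $c$), and conclude that without verification the $\varepsilon$-obvious-dominance inequality collapses. However, there is a genuine gap in how you assign the two deviation arguments to the two halves of the statement. In your first-claim paragraph you hypothesize that customer $i$ with $t_i=1$ is not verified in the truthful run on $\tp$ --- i.e.\ when she signals $1$ --- and then write ``no fine, since $i$ is unverified'' on the branch where she signals $\delta$. By your own observation in the second paragraph, verification is a function of the declarations, so non-verification in the run where $i$ signals $1$ says nothing about the run where she signals $\delta$: that step does not go through. (If instead you read ``not verified'' globally, as $p^i\equiv 1$ for all declarations, the argument is internally consistent but proves only that every such agent must be verifiable in some execution, which is weaker than the per-profile, per-realization claim the lemma makes and that the counting in Lemma~\ref{lem:worst_order} and Theorem~\ref{thm:opt_noBayes} requires.) The correct pairing is the opposite one: the type-$1$-lying-to-$\delta$ deviation forces verification of the customer who \emph{signals $\delta$} at $\rho(i)$, because there the lying execution's declarations coincide with $\tp$, so the hypothesized non-verification does transfer and the gain is $1>\varepsilon$. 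This is exactly the case the paper works out in detail (``verify agent $i$ if she takes action according to $\mathbf{S}_i(\delta)$''), and it is the case your two paragraphs never cover: both your first-claim argument and your mirror-profile argument end up targeting the $1$-signaler, leaving the $\delta$-signalers in the narrow range $n_i\leq n+k_i-c-1$, $k_i\leq c-2$ unaddressed.

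For the $1$-signaler your mirror-profile device (pairing $\tp$ with $\tp'$ obtained by flipping $t_i$ to $\delta$, and transferring non-verification because the declarations coincide) is the right mechanism, and is in fact spelled out more carefully than the paper's one-line ``similar argument''. But note that the ``positive gap'' you invoke there is only $\delta$, since a type-$\delta$ agent gains at most $\delta$ by inducing implementation; it contradicts $\varepsilon$-OSP only when $\varepsilon<\delta$, not for every $\varepsilon\in[0,1)$ as asserted. The paper's sketch shares this limitation, so it does not distinguish your attempt from the paper's proof, but it should not be stated as an unconditional violation. Once you swap the case assignment (type-$1$ deviation for $\delta$-signalers under the narrow conditions, type-$\delta$ deviation for $1$-signalers under the wider conditions) and keep your declaration-transfer observation, your argument matches the paper's proof.
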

\begin{proof}
	Let us begin by observing that since the mechanism implements the public project function then if $k_i \in \left[c + n_i - n, c-1\right]$ then $\rho(i)\neq h^*$. In words, whenever the threshold $c$ has not yet been reached and there are still enough agents who have not revealed their type, the mechanism will need to keep querying agents in order to establish its outcome.
	
 Let us first suppose that $i$ has type $1$.
%
 If $c + n_i + 1 - n \leq k_i \leq c-2$, then {$\mathbf{S}_i(1)$} is not obviously dominant:
 the worst outcome for $i$ when she plays this strategy is achieved when no remaining individual reveals type $1$,
 so that the mechanism's outcome is 0 (since there are at most $c-1$ customers with type $1$), and this outcome has value $0$ for $i$;
 instead, by taking the action corresponding to {$\mathbf{S}_i(\delta)$}, the best outcome would occur when at least $c-k_i$ among the remaining agents
 (that are $n-n_i-1 \geq c-k_i$) declare type $1$, so that the mechanism returns $1$, by giving to $i$ utility $1 > 0 + {\varepsilon}$.

 Hence, in order for the mechanism to be $\varepsilon$-OSP it is necessary to verify agent $i$ if she
 {takes action according to $\mathbf{S}_i(\delta)$ at $\rho(i)$} whenever $k_i \in \left[c+n_i+1-n, c-2\right]$.
 A similar argument also can be adopted when $i$'s type is $\delta$, from which we achieve that,
 if the mechanism is $\varepsilon$-OSP, then the agent $i$ that {takes action according to $\mathbf{S}_i(1)$ at $\rho(i)$}
 must be verified whenever $k_i \in \left[c + n_i - n, c-1\right]$.
\end{proof}

It is instructive to observe that if $k_i > c - 2$ and $\rho(i) \neq h^*$
then it is obviously dominant for agent $i$ of type $1$ to play according to strategy $\mathbf{S}_i(1)$ at $\rho(i)$. 
Indeed, since the mechanism implements the public project function, it returns $1$ in this case,
and $i$ will achieve her maximum possible utility, i.e., $1$, regardless of the actions of the others. Similarly, if the number of agents different from $i$ whose type is still undefined is not sufficient to reach the threshold $c$,
i.e., $n - n_i -1 < c - k_i$, then it is obviously dominant for $i$ to to play according to strategy $\mathbf{S}_i(1)$,
otherwise she will achieve the minimum possible utility, i.e., $0$, regardless of the actions of the remaining agents. A similar argument holds also for agent $i$ of type $\delta$ whenever $\rho(i) \neq h^*$ and either $k_i > c - 1$ or $n - n_i < c - k_i$. Hence, Lemma~\ref{lem:ver_cond} is in a sense tight, since the customers that it identities are the \emph{only} ones that need to be verified in an $\varepsilon$-OSP mechanism that implements the public project function.

{Observe that $\rho$ defines an order $\pi$ according to which agents reveal their type,
such that $\pi(1)$ contains all agents $i$ such that $\rho(j) \nprec \rho(i)$ for every agent $j$,
$\pi(2)$ contains all agents $i$ such that $\rho(j) \nprec \rho(i)$ for every agent $j$ not in $\pi(1)$, and so on; $\pi$ is arbitrarily completed by the agents $i$ such that $\rho(i)=h^*$. This revelation order may be defined by the deterministic or randomized choices of the mechanism
or by nature (that is, the mechanism processes agents as they come)\footnote{Unless differently specified, we do not make any difference about this feature and our results hold no matter the source of this order.}.
Lemma~\ref{lem:ver_cond} proves that the number of verified agents depends only on this revelation order. For this reason, we will from now on focus only on this order and we will forget the other details of the mechanism.
In particular, we will say that}
the \emph{record} at time $t \geq 1$ is $R_t = ((\pi(1),\bp_{\pi(1)}), \ldots, (\pi(t),\bp_{\pi(t)}))$,
where for a subset of agents $S \subseteq N$, $\bp_S = (b_i)_{i\in S}$.
We will denote with $\calR_t$ the set of all possible records at time $t$ and with $N_\pi^{t-1} = \bigcup_{j < t} \pi(j)$ the set of agents that revealed their type before time $t$.
Moreover we set $\calR_0=R_0 = \emptyset$.
A \emph{selection rule} $\bsig = (\sigma_1, \ldots, \sigma_l)$, $l \leq n$, where each $\sigma_t \colon \calR_{t-1} \rightarrow \Delta(2^N)$, associates to each possible record $R_{t-1}$ a probability distribution $\sigma_t(R_{t-1})$ over the subsets of agents in $N \setminus N_\pi^{t-1}$. Roughly speaking, the selection rule specifies how the mechanism (nature, resp.) selects the next players to reveal their type. This definition allows us to represent every selection rule, even adaptive ones (in which players are selected based also on the type revealed by previous agents).
In what follows, we focus on selection rules defined (defined by nature and faced, respectively) by of an $\varepsilon$-OSP mechanism with bounded approximation ratio.

Given $c$ and a selection rule $\bsig$, for a type profile $\tp$ such that $\left|\left\{i \colon t_i = 1\right\}\right| = c$,
we let $\tau_{\bsig,c}(\tp)$ be the random variable that measures the number of agents that have been verified by the mechanism on type profile $\tp$.
When clear from the context, we omit $c$ from the subscript.

We begin by observing that if a selection rule selects more than one agent to reveal their type at the same time,
then it can only reach the thresholds of Lemma \ref{lem:ver_cond} later than the selection rule wherein the revelation order is serialized.
So, serialization cannot increase the number of verified customers.
Moreover, as highlighted in \citep{BadeG17,mackenzie2017revelation},
serialization does not affect the OSP-ness of the mechanism.
Thus, we have the following observation.
\begin{obs}
 \label{obs:single_better}
  Fix $\varepsilon \in [0,1)$,
  $t > 0$ and a record $R_{t-1}$.
  Let $\bsig$ be a selection rule of an $\varepsilon$-OSP mechanism
  that implements the public project function,
  such that $\sigma_t(R_{t-1})$ assigns non-zero probability to a subset of agents $A$ with $|A| \geq 2$.
  Then there is an alternative $\varepsilon$-OSP mechanism
  that implements the public project function
  with selection rule $\bsig'$
  such that $\sigma'_t(R_{t-1})$ assigns non-zero probability only to singletons of agents,
  and for every type profile $\tp$ it holds that $E[\tau_{\bsig}(\tp)] \geq  E[\tau_{\bsig'}(\tp)]$.
\end{obs}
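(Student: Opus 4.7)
}

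My plan is to construct $\bsig'$ explicitly from $\bsig$ by serializing exactly the batches that $\sigma_t(R_{t-1})$ returns with size $\geq 2$, and then to control the expected number of verified agents via the characterization of verifiable agents provided by Lemma~\ref{lem:ver_cond}.

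First, I would construct $\bsig'$ as follows. For every subset $A \subseteq N \setminus N_\pi^{t-1}$ with $|A| \geq 2$ in the support of $\sigma_t(R_{t-1})$, fix an arbitrary ordering $a_1, \ldots, a_{|A|}$ of $A$ and redirect the mass $\sigma_t(R_{t-1})(A)$ onto the singleton $\{a_1\}$; then at the record resulting from $a_1$ revealing, put all mass on $\{a_2\}$; and so on until $a_{|A|}$ has revealed. At the record that corresponds to the conclusion of the serialized chain---which contains exactly the same $(\text{agent}, \text{bid})$ pairs as the record $\bsig$ would have produced after selecting $A$ in one shot---continue with $\bsig$. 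All other records and all singletons already in the support of $\sigma_t(R_{t-1})$ are left untouched. By construction, $\sigma'_t(R_{t-1})$ is supported on singletons.

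Next, I would argue that $\bsig'$ still induces an $\varepsilon$-OSP mechanism implementing the public project function. This is the place where I invoke the serialization results of \citep{BadeG17,mackenzie2017revelation}: splitting a simultaneous move of several agents into a chain of individual moves (where later agents do not observe earlier ones within the chain, encoded through information sets) preserves the set of obviously dominant strategies, since every $i \in A$ still evaluates her action in the worst case over \emph{all} profiles of the remaining agents. The output of the public project function depends only on the multiset of revealed types and not on the order or granularity of revelation, so $\M$'s outcome is unchanged on every type profile. This part should be mostly an appeal to known results, possibly with a brief remark checking that information sets can be set up so that obvious dominance transfers verbatim.

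Finally, I would compare $E[\tau_{\bsig}(\tp)]$ to $E[\tau_{\bsig'}(\tp)]$ by a coupling argument. Fix $\tp$ and couple the two processes through a common draw of the selection-rule randomness and the mechanism's chance tape. For any agent $i \notin A$ the values $n_i$ and $k_i$ at the moment $i$ reveals coincide in the two processes, because the same set of agents has revealed strictly earlier. For an agent $i \in A$ placed in position $j$ of the serial order, $n_i$ grows by exactly $j-1$ and $k_i$ grows by the number of type-$1$ agents among $a_1, \ldots, a_{j-1}$; in particular $n_i - k_i$ weakly increases and $k_i$ weakly increases. By Lemma~\ref{lem:ver_cond} (and the tightness remark following its proof), the agents that need to be verified are exactly those satisfying the inequalities $n_i \leq n + k_i - c - 1$ and $k_i \leq c-2$, with the stated boundary cases when $i$ signals type $1$. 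Both inequalities become \emph{weakly harder} to satisfy when $n_i - k_i$ and $k_i$ grow, so the set of $A$-agents that must be verified under $\bsig'$ is contained in the corresponding set under $\bsig$, and the agents outside $A$ contribute the same count. Hence $\tau_{\bsig'}(\tp) \leq \tau_{\bsig}(\tp)$ pathwise, and taking expectations yields the claim. The main obstacle I anticipate is the careful bookkeeping of the boundary cases $n_i = n + k_i - c$ and $k_i = c-1$ under serialization, but since these thresholds only shift in the same monotone direction as the other conditions, the pathwise domination still holds.
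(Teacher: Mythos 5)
Your proposal follows the same two ideas the paper uses to justify the observation: (i) serialization of a simultaneous batch preserves $\varepsilon$-OSP, appealing to \citep{BadeG17,mackenzie2017revelation}; and (ii) serialization can only push each batch agent's $(n_i, k_i)$ in a direction that makes the verification conditions of Lemma~\ref{lem:ver_cond} harder to satisfy, hence no more agents need to be verified. The paper states this informally in one sentence; your coupling argument makes the monotonicity of the two thresholds $n_i - k_i \le n-c-1$ and $k_i \le c-2$ explicit (both quantities weakly increase along the serialized chain, so the set of batch agents requiring verification can only shrink), which is a correct and slightly more rigorous rendering of the paper's reasoning.
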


The \emph{uniform selection} rule $U$ returns, for every $t$ and for every record, the uniform distribution over agents that have not yet revealed their type.  Next we show that for every selection rule $\bsig$ there is an instance $\tp$ in which $\bsig$ performs worse than the uniform selection rule, in terms of the expected number of agents verified $E[\tau_{\bsig}(\tp)]$.

\begin{lemma}
 \label{lem:worst_order}
  For $\varepsilon \in [0,1)$, $c>0$ and selection rule $\bsig$ for an $\varepsilon$-OSP mechanism
  that implements the public project function,
  there is $\tp$ s.t. $\left|\left\{i \colon t_i = 1\right\}\right| = c$ and $E[\tau_{\bsig}(\tp)] \geq  E[\tau_U(\tp)]$.
\end{lemma}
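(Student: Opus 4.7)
The plan is a symmetry plus averaging argument. First, by Observation~\ref{obs:single_better}, I may assume without loss of generality that $\bsig$ reveals agents one at a time, so each $\sigma_t(R_{t-1})$ is supported on singletons. The crucial structural input is Lemma~\ref{lem:ver_cond}: it characterizes the agents that must be verified purely in terms of the quantities $n_i$ and $k_i$ measured at the point of revelation, together with the action signaled at $\rho(i)$, which for any revealed agent coincides with her true type (since the mechanism must correctly implement $f$). Consequently, $\tau_{\bsig,c}(\tp)$ is a deterministic function of the sequence of revealed types $(b_{\pi(1)}, b_{\pi(2)}, \ldots)$ alone, and does not depend on the identities of the queried agents or on any finer feature of $\bsig$.

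I would then draw $\tp$ uniformly at random from the set of type profiles containing exactly $c$ ones, and compare $E_\tp\!\left[E[\tau_{\bsig}(\tp)]\right]$ against $E[\tau_U(\tp^*)]$ for any fixed profile $\tp^*$ with $c$ ones. The claim is that both quantities equal the expectation of the same functional evaluated on a uniformly random ordering of the multiset $M = \{1^c,\delta^{n-c}\}$. For $U$ this is immediate: on any fixed profile, the uniform rule produces a uniformly random permutation of the agents, hence a uniformly random ordering of $M$, and the expectation depends only on $c$ and $n$. For $\bsig$ under uniform $\tp$ the argument is inductive: conditional on a record $R_{t-1}$, the unrevealed types form a uniformly random arrangement on the unrevealed agents of the residual multiset; since $\bsig$'s step-$t$ choice of which agent to query depends only on $R_{t-1}$ and internal randomness, the newly revealed type is a uniform draw from the residual multiset, completing the induction.

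Combining these two facts, $E_{\tp}\!\left[E[\tau_{\bsig}(\tp)]\right] = E[\tau_U(\tp^*)]$, and since $E[\tau_U(\cdot)]$ is the same constant on every profile with $c$ ones, an averaging argument produces some $\tp$ with $E[\tau_{\bsig}(\tp)] \geq E[\tau_U(\tp)]$, as required. The main obstacle I anticipate is making the inductive step rigorous: one has to argue that even a fully adaptive $\bsig$ cannot break the exchangeability of the hidden types on the unrevealed agents, because its choices are by definition a function of the observed record only, and then combine this exchangeability with the conditional uniformity of $\tp$ to show that the next revealed symbol is uniform over the residual multiset. A secondary point to handle is the possibility that $\rho(i) = h^*$ for some agents (the mechanism stops before querying them); those agents cannot be verified and contribute $0$ to $\tau$, and the stopping rule is again a function of the observed type sequence only, so the reduction to uniform orderings goes through unchanged.
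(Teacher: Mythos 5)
Your proposal is correct and follows essentially the same route as the paper: reduce to singleton revelations via Observation~\ref{obs:single_better}, place the uniform prior on profiles with exactly $c$ ones, use the fact that under this prior the revealed-type sequence is a uniform ordering of the multiset (the paper's property \eqref{eq:unif_prop}) so that the expected number of Lemma~\ref{lem:ver_cond}-mandated verifications is the same for $\bsig$ as for $U$, and conclude by anonymity of $U$ plus averaging. The only difference is presentational: you argue exchangeability forward in time, while the paper runs a backward induction on records, which amounts to the same symmetry fact.
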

\begin{proof}
 According to Observation~\ref{obs:single_better}, we can assume without loss of generality that at each time step $\bsig$ assigns positive probability only to singletons.

 Let $P$ be the uniform distribution on the type profiles $\tp$ such that $\left|\left\{i \colon t_i = 1\right\}\right| = c$, i.e.
 $P(\tp) = \left(\binom{n}{c} (n-c)! c!\right)$ if $\left|\left\{i \colon t_i = 1\right\}\right| = c$ and $P(\tp) = 0$ otherwise.
 Observe that, for every $t$, every $R_{t-1}$,
 and for agent $i$ selected at time $t$,
 \begin{equation}
  \label{eq:unif_prop}
  \Pr_P\left(t_i = 1 \mid R_{t-1}\right) = \frac{c-k_i}{n-n_i}.
 \end{equation}

 We prove by induction on $i = 1, \ldots, n$
 that $E_{\tp \sim P}[\tau_{\bsig}(\tp_{N \setminus N_\pi^{n-i}}) \mid R_{n-i}] \geq E_{\tp \sim P}[\tau_U(\tp_{N \setminus N_\pi^{n-i}}) \mid R_{n-i}]$
for every $R_{n-i}$ (and thus $\tp_{N_\pi^{n-i}}$).
 Then, since $R_0 = \emptyset$, we have that $E_{\tp \sim P}[\tau_{\bsig}(\tp)] \geq  E_{\tp \sim P}[\tau_U(\tp)]$.
 The lemma follows because it must exist $\tp^\star$ such that $\left|\left\{i \colon t^\star_i = 1\right\}\right| = c$
 and $E[\tau_{\bsig}(\tp^\star)] \geq  E[\tau_U(\tp^\star)]$.

 The base case is $i = 1$. Observe that for every selection rule exactly the same agent will reveal her type,
 namely, the only one that has not revealed her type in $R_{n-1}$.
 For the given record $R_{n-1} \in \calR_{n-1}$, let $k$ denote the number of agents who declared $1$.
 It must be that either $k = c$ or $k=c-1$ (since we are only considering profiles with exactly $c$ agents with type $1$).
 According to Lemma~\ref{lem:ver_cond}, in the first case the last agent must not be verified,
 whereas in the last case it is necessary to verify the last agent only if she declares $1$.
 However, since this choice is independent from the selection rule and from $\tp_{N_\pi^{n-1}}$, the claim trivially holds.

 Assume that the claim holds for $i-1$ (i.e., for every $R_{n-i+1}$ we have that
 $E_{\tp \sim P}[\tau_{\bsig}(\tp_{N \setminus N_\pi^{n-i+1}}) \mid R_{n-i+1}] \geq  E_{\tp \sim P}[\tau_U(\tp_{N \setminus N_\pi^{n-i+1}}) \mid R_{n-i+1}]$).
 We prove it also for $i$.

 Consider a record $R_{n-i}$.
 If at least $c$ agents declaring type $1$ have revealed her type, or the number of those whose type is still unknown is too low for reaching the threshold $c$, then no further verification needs to be made regardless of the selection rule adopted in the remaining steps. Hence, the claim trivially holds.
 If instead $k \leq c-1$ customers declared $1$ and there are exactly $c-k$ agents whose type is unknown,
 then for every $\tp$ such that $\left|\left\{i \colon t_i = 1\right\}\right| = c$, it will occur
 that all remaining agents have type $1$ and thus, according to Lemma~\ref{lem:ver_cond} they must be all verified,
 regardless of the order in which they are selected.
 Hence, the claim trivially holds.
 Consider now the case that $k \leq c-1$ customers declared $1$ and there are $p \geq c-k+1$ agents that have not yet revealed her type.
 If $\sigma_{n-i+1}(R_{n-i}) = U(R_{n-i})$, then the claim directly follows from the inductive hypothesis.
 Otherwise, let $S=N \setminus N_{\pi}^{n-i}$.
 Since, by Lemma~\ref{lem:ver_cond} the agent selected at the $(n-i+1)$-th time step will be surely verified,
 the expected number of agents verified is
 \begin{align*}
  & E_{\tp \sim P}[\tau_{\bsig}(\tp_{N \setminus N_{\pi}^{n-i}}) \mid R_{n-i}] = 1 + \sum_{z \in S} \sigma_{n-i+1}(R_{n-i})(z)\\
  & \qquad \cdot \sum_{\beta \in \{\delta, 1\}} \Pr_P(b_z = \beta) E_{\tp \sim P}[\tau_{\bsig}(\tp_{N \setminus N_{\pi}^{n-i}}) \mid R_{n-i} \cup (z,\beta)].
 \end{align*}
 By using that the mechanism is $\varepsilon$-OSP, we achieve that
 \begin{align*}
  & E_{\tp \sim P}[\tau_{\bsig}(\tp_{N \setminus N_{\pi}^{n-i}}) \mid R_{n-i}] = 1 + \sum_{z \in S} \sigma_{n-i+1}(R_{n-i})(z)\\
  & \qquad \cdot \sum_{\beta \in \{\delta, 1\}} \Pr_P(t_z = \beta) E_{\tp \sim P}[\tau_{\bsig}(\tp_{N \setminus N_{\pi}^{n-i}}) \mid R_{n-i} \cup (z,\beta)].
 \end{align*}
 Moreover, from the inductive hypothesis, it follows that
 \begin{align*}
  & E_{\tp \sim P}[\tau_{\bsig}(\tp_{N \setminus N_{\pi}^{n-i}}) \mid R_{n-i}] \geq 1 + \sum_{z \in S} \sigma_{n-i+1}(R_{n-i})(z)\\
  & \qquad \cdot \sum_{\beta \in \{\delta, 1\}} \Pr_P(t_z = \beta) E_{\tp \sim P}[\tau_{U}(\tp_{N \setminus N_{\pi}^{n-i}}) \mid R_{n-i} \cup (z,\beta)].
 \end{align*}

 Observe that, by anonymity of the uniform selection rule, $E[\tau_{U}(\tp) \mid R_{t}]$ depends only on the number of agents
 that declared type $1$ in $R_t$ and not on their identities. That is, $E[\tau_{U}(\tp) \mid R_{t}] = E[\tau_{U}(\tp) \mid \kappa]$,
 if in $R_t$ only $\kappa$ agents declared $1$.
 Hence, and according to \eqref{eq:unif_prop}, we have
 \begin{align*}
 & E_{\tp}[\tau_{\bsig}(\tp_{N \setminus N_{\pi}^{n-i}}) \mid R_{n-i}] = 1 +\\
 & \qquad \left(\frac{p-c+k}{p} E_{\tp}[\tau_{U}(\tp_{N \setminus N_{\pi}^{n-i}}) \mid k]+\right.\\
 & \qquad \left.\frac{c-k}{p} E_{\tp}[\tau_{U}(\tp_{N \setminus N_{\pi}^{n-i}}) \mid k+1]\right) \sum_{z \in S} \sigma_{n-i+1}(R_{n-i})(z).
 \end{align*}
 Then $\sum_{z \in S} \sigma_{n-i+1}(R_{n-i})(z) = 1 = \sum_{z \in S} U(z)$ implies $E_{\tp \sim P}[\tau_{\bsig}(\tp_{N \setminus N_{\pi}^{n-i}}) \mid R_{n-i}]=E_{\tp \sim P}[\tau_{U}(\tp_{N \setminus N_{\pi}^{n-i}}) \mid R_{n-i}]$.
\end{proof}

We can now state the main theorem of this section.
\begin{theorem}
\label{thm:opt_noBayes}
 For all $\varepsilon$-OSP mechanisms implementing the public project function, with $\varepsilon \in [0,1)$,
 there is an instance on which the mechanism verifies in expectation $n-o(n)$ agents.
\end{theorem}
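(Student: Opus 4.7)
The plan is, given any $\varepsilon$-OSP mechanism that implements the public project function, to specialise the cost to $c = \lceil n/2 \rceil$ and combine Observation~\ref{obs:single_better} with Lemma~\ref{lem:worst_order} to reduce to analysing the uniform selection rule $U$ on an instance $\tp^{\star}$ with exactly $c$ ones. Since $U$ is anonymous and the profile is determined by its multiset of types, $E[\tau_{U}(\tp^{\star})]$ depends only on $c$ and $n$; hence it suffices to prove $E[\tau_{U}(\tp^{\star})] \geq n - o(n)$ for $c = \lceil n/2 \rceil$, and then apply Lemma~\ref{lem:worst_order} to transfer the bound to the given mechanism.

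Under $U$ the revelation order is a uniformly random permutation $\sigma$ of $[n]$, and the agent revealed at position $t$ has $n_i = t-1$ and $k_i$ equal to the number of $1$'s in positions $1,\ldots,t-1$ of $\sigma$. Let $T_1 - 1$ be the position of the $(c-1)$-th $1$ in $\sigma$, let $T_2 - 1$ be the position of the last $\delta$ in $\sigma$, and set $T = \min(T_1, T_2)$. For every $t < T$ we simultaneously have $k_i \leq c-2$ and $n_i - k_i \leq n-c-1$, so by Lemma~\ref{lem:ver_cond} each of the first $T-1$ revealed agents must be verified. Thus $\tau_{U}(\tp^{\star}) \geq T - 1$ deterministically, and the problem reduces to showing $E[T] \geq n - O(1)$.

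To control $E[T]$ I would invoke the standard order-statistic identity that the expected position of the $k$-th smallest element of a uniformly random $K$-subset of $[n]$ equals $k(n+1)/(K+1)$. Applying it with $(K,k)=(c,c-1)$ and with $(K,k)=(n-c,n-c)$ gives
\[
 n - E[T_1] \;=\; \frac{2(n-c)}{c+1}, \qquad E[(n-T_2)^{+}] \;\leq\; \frac{c}{n-c+1},
\]
both $O(1)$ for $c=\lceil n/2\rceil$. Since $T = \min(T_1,T_2)$ and $T_1 \leq n$ always, the deterministic inequality $n - T \leq (n - T_1) + (n - T_2)^{+}$ holds; taking expectations and adding the two bounds yields $E[n - T] = O(1)$, and therefore $E[\tau_{U}(\tp^{\star})] \geq E[T] - 1 = n - O(1) = n - o(n)$.

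The step I expect to demand the most care is aligning the two stopping times $T_1, T_2$ with the case analysis of Lemma~\ref{lem:ver_cond}: one must check that the generic condition ``$k_i \leq c-2$ and $n_i \leq n+k_i-c-1$'' is strictly in force before both thresholds are crossed, so that the lower bound $T-1$ is genuine (the asymmetric boundary clauses in Lemma~\ref{lem:ver_cond} only add extra verifications and hence can be safely discarded), and then correctly handle the corner $T_2 = n+1$, which occurs when $\sigma$ ends with a $1$ and is exactly what the $(\cdot)^{+}$ truncation absorbs. Once this bookkeeping is settled, the two order-statistic computations close the argument.
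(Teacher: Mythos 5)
Your proof is correct, but it follows a genuinely different route from the paper's. The paper sets $c = 1 + \sqrt{n-1}$ and bounds $E[\tau_U]$ via the tail estimate $E[\tau_U] \geq (n-c-1)\bigl(1 - \Pr(\tau_U < n-c-1)\bigr)$, then shows by a hypergeometric computation that $\Pr(\tau_U < n-c-1) = o(1)$. You instead set $c = \lceil n/2 \rceil$, observe the \emph{deterministic} lower bound $\tau_U \geq T-1$ with $T = \min(T_1, T_2)$ a stopping time tied to Lemma~\ref{lem:ver_cond}, and then bound $E[T]$ by the exact expectation of order statistics of a random subset. Your decomposition is arguably cleaner and even gives the slightly sharper $n - O(1)$, whereas the paper's tail-bound argument is more robust in the sense that it transfers directly to the asymptotic implementation case in Proposition~\ref{propt:opt_noBayes_asym} (which reuses the same $\Pr(\bar E_{n-c-1}) = o(1)$ estimate); a reader following your route would have to re-derive that tail bound separately for the extension.

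One small imprecision worth flagging: the order-statistic identity gives $E[T_2 - 1] = \frac{(n-c)(n+1)}{n-c+1}$, hence $E[n - T_2] = \frac{2c - n - 1}{n-c+1}$, which is nonpositive for $c = \lceil n/2 \rceil$, so it does not \emph{directly} yield your stated bound on $E[(n-T_2)^+]$. You need the extra step of adding back $E[(T_2 - n)^+] = \Pr(T_2 = n+1) = \frac{n-c}{n}$, after which one finds $E[(n-T_2)^+] = \frac{c}{n-c+1} - \frac{c}{n}$, so your claimed inequality does in fact hold; just make that step explicit. Also, while discarding the asymmetric boundary clauses in Lemma~\ref{lem:ver_cond} is fine for a lower bound, you should note (as the paper's surrounding text does) that the mechanism is guaranteed to reveal agents at positions $1, \ldots, T-1$: this follows because on a profile with exactly $c$ ones, a mechanism implementing $f$ exactly cannot halt before the $c$-th $1$ appears, and $T - 1 < T_1 \leq P_c$ where $P_c$ is the position of the $c$-th $1$.
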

\begin{proof}
 Let $c = 1+\sqrt{n-1}$. We next show that if the mechanism adopts the uniform selection rule,
 then for every $\tp$ such that $\left|\left\{i \colon t_i = 1\right\}\right| = c$,
 $E[\tau_U(\tp)] \geq n-o(n)$. The theorem then follows by merging this result with Lemma~\ref{lem:worst_order}.
 Observe indeed that
 \begin{align*}
  E[\tau_U(\tp)] & \geq (n-c-1) \cdot \left(1 - \Pr(\tau_U(\tp) < n-c-1)\right)\\
  & = (n-o(n)) \left(1 - \Pr(\tau_U(\tp) < n-c-1)\right).
 \end{align*}
 We next prove that $\Pr(\tau_U(\tp) < C) = o(1)$, concluding in this way the proof.
For $\tau_U(\tp)$ to be less than $n-c-1$ it must be the case that $c-1$ customers with value $1$
 have been selected among the first $n-c-2$ agents that revealed her type.
 Indeed, since a mechanism that implements the public project function must reveal the type of all agents
 until either $c$ agents have declared to have type $1$ or the number of agents whose type is unknown is insufficient to reach this threshold,
 then the $(n-c-1)$-th agent must necessarily reveal her type, and by Lemma~\ref{lem:ver_cond} it must be verified.
 Observe that there are $\binom{n-c-2}{c-1}$ revealing orders so that this property is satisfied
 among the $\binom{n}{c}$ orders in which the $c$ customers with type $1$ can reveal it.
 Since we are using the uniform selection rule, these arrangements have the same probability, and thus
 \begin{align*}
  & \Pr(\tau_U(\tp) < n-c-1) = \frac{\binom{n-c-2}{c-1}}{\binom{n}{c}}\\
  & = \frac{(n-c-2)!}{(c-1)!(n-2c+1)!} \cdot \frac{c!(n-c)!}{n!}\\
  & = c \cdot \frac{(n-c-2)!}{(n-2c+1)!} \cdot \frac{(n-c)!}{n!}\\
  & = c \cdot \frac{\prod_{i = 1}^{c-1} (n-c+1-i)}{\prod_{i = 0}^{c-1} (n-i)}\\
  & = \frac{c}{n} \cdot \prod_{i = 1}^{c-1} \frac{n-c+1-i}{n-i} = \frac{c}{n} \cdot \prod_{i = 1}^{c-1} \left(1 - \frac{c-1}{n-i}\right)\\
  & \leq \frac{c}{n} \left(1 - \frac{c-1}{n-1}\right)^{c-1}\\
  & = \frac{1 + \sqrt{n-1}}{n} \left(1 - \frac{1}{\sqrt{n-1}}\right)^{\sqrt{n-1}}\\
  & \leq \frac{1 + \sqrt{n-1}}{en} = o(1). \tag*{\qed}
 \end{align*}
 \let\qed\relax
\end{proof}
We will next show that this theorem can be extended in order to work even if we have weaker constraints.
Specifically, we prove that the result holds even in the Bayesian setting (Proposition~\ref{prop:bayes}),
i.e., if the mechanism designer knows the distribution from which the agents' types are drawn,
or it holds if we only require that the mechanism implements the public project function only asymptotically (Proposition~\ref{propt:opt_noBayes_asym}).

We focus first on the Bayesian setting.
\begin{proposition}
 \label{prop:bayes}
 For every $\varepsilon$-OSP mechanism implementing the public project function in the Bayesian setting, with $\varepsilon \in [0,1)$, there is an instance for which the mechanism verifies in expectation $n-o(n)$ agents.
\end{proposition}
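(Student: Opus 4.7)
The plan is to reuse the machinery developed for Theorem~\ref{thm:opt_noBayes} and to observe that an adversarially chosen prior neutralises any advantage the Bayesian designer could extract. Concretely, I would fix the prior $\Delta$ to be the uniform distribution over type profiles with exactly $c = 1+\sqrt{n-1}$ coordinates equal to $1$ (and the remaining ones equal to $\delta$). Since the paper's Bayesian convention keeps the players unaware of $\Delta$, the definition of $\varepsilon$-OSP is unchanged, so Lemma~\ref{lem:ver_cond} (which only invokes OSP-ness together with correct implementation of the public project function) and Observation~\ref{obs:single_better} (which only relies on serialization) apply verbatim to a Bayesian $\varepsilon$-OSP mechanism $\M$.

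The heart of the argument is to re-examine the proof of Lemma~\ref{lem:worst_order}. Nowhere does it exploit that the mechanism is oblivious to a prior: the selection rule $\bsig$ is treated as an arbitrary map from records to distributions over the remaining agents, and the uniform distribution on profiles with $c$ ones is used only as an analytical averaging device. In the Bayesian setting this averaging device simply coincides with the actual prior $\Delta$, so the same inductive argument goes through, yielding $E_{\tp \sim \Delta}[\tau_{\bsig}(\tp)] \geq E_{\tp \sim \Delta}[\tau_U(\tp)]$ for every selection rule $\bsig$ of a Bayesian $\varepsilon$-OSP mechanism designed for $\Delta$. By averaging, some profile $\tp^\star$ in the support of $\Delta$ satisfies $E[\tau_{\bsig}(\tp^\star)] \geq E[\tau_U(\tp^\star)] \geq n - o(n)$, where the final inequality is the explicit computation at the end of the proof of Theorem~\ref{thm:opt_noBayes} for $c = 1+\sqrt{n-1}$. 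The pair $(\Delta,\tp^\star)$ is then the sought instance.

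The only delicate point is verifying that a Bayesian selection rule still fits the abstract form required by Lemma~\ref{lem:worst_order}: knowledge of $\Delta$ may change \emph{which} selection rule the designer implements, but never the family of allowed rules. The symmetry of the adversarial $\Delta$ — precisely what makes $\sum_{z} \sigma_{n-i+1}(R_{n-i})(z) = 1$ collapse the inductive step into a prior-independent expression — is what neutralises the designer's informational advantage. Once this invariance is checked, the extension to the Bayesian setting is an almost mechanical adaptation of the non-Bayesian argument.
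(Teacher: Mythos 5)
Your argument is essentially sound, but it is a genuinely different route from the paper's. You reduce the Bayesian case to the non-Bayesian machinery: since in this model the players do not know the prior, $\varepsilon$-OSP-ness and exact implementation of $f$ are prior-independent constraints, so Lemma~\ref{lem:ver_cond}, Observation~\ref{obs:single_better} and Lemma~\ref{lem:worst_order} apply verbatim, the designer's knowledge of $\Delta$ can only affect \emph{which} selection rule she picks (a family the lemma already quantifies over), and the computation in Theorem~\ref{thm:opt_noBayes} with $c=1+\sqrt{n-1}$ then yields a profile with $E[\tau_{\bsig}(\tp^\star)]\geq E[\tau_U(\tp^\star)]\geq n-o(n)$; in effect you show Theorem~\ref{thm:opt_noBayes} already implies the proposition. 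The paper instead gives a self-contained direct argument: it takes an i.i.d.\ Bernoulli prior with a very small probability $p=\frac{g}{g+(n-c-1)^2}$ of type $1$, uses only Lemma~\ref{lem:ver_cond} together with a binomial tail estimate $\Pr(k_i\leq c-2\mid n_i\geq c-1)\geq(1-p)^{n_i}\geq 1-\frac{g}{n-c-1}$, and concludes that the \emph{expected number of verifications under the prior itself} is at least $n-c-1-o(n)$, never invoking the uniform-selection-rule comparison. Your reduction is more economical and makes transparent why the prior gives no advantage; the paper's choice buys two things your write-up glosses over: (i) its prior is a product distribution, matching the paper's per-agent formalization $\Delta_i$, whereas your uniform-over-$c$-ones prior is correlated and has restricted support (harmless here only because, as you note, the averaging measure in Lemma~\ref{lem:worst_order} need not coincide with the prior at all — you could drop that identification entirely), and (ii) it bounds the verification cost averaged over the prior, the natural Bayesian benchmark, rather than at a single adversarial profile; your first inequality $E_{\tp\sim\Delta}[\tau_{\bsig}(\tp)]\geq E_{\tp\sim\Delta}[\tau_U(\tp)]$ does recover such a statement, but only for your correlated $\Delta$, not for a product prior.
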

\begin{proof}
 Consider an instance in which each agent has the same probability $p$ of having type $1$.
 By symmetry of agents, the order $\pi$ in which agent reveal their type cannot depend on this prior.
 Moreover, since the mechanism implements the public project function,
 each agent must reveal her type at some time as long as either $c$ agents have declared to have type $1$
 or the number of agents whose type is unknown is insufficient to reach this threshold.

 Without loss of generality, we rename the agents so that an agent with lower index is scheduled before an agent with higher index.
 That is, $i < j$ implies that $n_i \leq n_j$. It then follows that $n_i \leq i$.

 Let us define the random variable $X_i$ that assumes value $1$ if the customer $i$ is verified, and $0$ otherwise.
 Observe that $X = \sum_i X_i$ is the number of verified customers.
 By Lemma \ref{lem:ver_cond}, the expected number of verified agents is
\begin{align*}
 E\left[X\right] & = \sum_{i=0}^{n-1} \Pr(X_i = 1)\\
 & \geq \sum_{i=0}^{n-1} \Pr\left(k_i \in \left[c+n_i+1-n, c-2\right]\right).
\end{align*}
If $i \leq n-c-1$, then $c+n_i+1-n \leq 0$. Hence, we have $\Pr\left(k_i \in \left[c+n_i+1-n, c-2\right]\right) = \Pr\left(k_i \leq c-2\right)$.
Since $\Pr\left(k_i \leq c-2\right) = 1$ if $n_i \leq c-2$,
and thus for at least the first $c-1$ customers that revealed her type,
we then have
\begin{equation}
\label{eq:ex_ver}
\begin{aligned}
 E\left[X\right] & \geq \sum_{i=0}^{n-c-1} \Pr\left(k_i \leq c-2\right)\\
 & \geq c-1 + \sum_{i=c}^{n-c-1} \Pr\left(k_i \leq c-2 \mid n_i \geq c-1\right)
 \end{aligned}
\end{equation}
Hence, we achieve that
\begin{equation}
\label{eq:p_ver}
\begin{aligned}
 & \Pr\left(k_i \leq c-2 \mid n_i \geq c-1\right) =\\
 & \quad \sum_{j = 0}^{c-2} \Pr\left(k_i = j \mid n_i \geq c-1\right) =\\
 & \quad \sum_{j = 0}^{c-2} \binom{n_i}{j} p^j (1-p)^{n_i-j} =\\
 & \quad (1-p)^{n_i} \cdot \sum_{j = 0}^{c-2} \frac{1}{j!} \frac{n_i!}{(n_i - j)!} \left(\frac{p}{1-p}\right)^j =\\
 & \quad (1-p)^{n_i} \cdot \left(1 + \sum_{j = 1}^{c-2} \frac{1}{j!} \frac{n_i!}{(n_i - j)!} \left(\frac{p}{1-p}\right)^j\right) \geq\\
 & \quad (1-p)^{n_i}.
\end{aligned}
\end{equation}
Let $\rho_i = \frac{1}{p}$. It is well-known that $\left(1 + \frac{1}{x}\right)^x < e < \left(1 + \frac{1}{x-1}\right)^x$,
from which we have that
$$
 (1-p)^{n_i} = \left(1-\frac{1}{\rho_i}\right)^{n_i} > e^{-\frac{n_i}{\rho_i - 1}} > 1 - \frac{n_i}{\rho_i - 1} = 1 - \frac{n_i p}{1 - p}.
$$
If $p = \frac{g}{g+(n-c-1)^2} \leq \frac{g}{g+n_i(n-c-1)}$ for some $g=o(n)$, then it is easy to see that
\begin{equation}
 \label{eq:lb_p_ver}
 (1-p)^{n_i} > 1 - \frac{n_i p}{1 - p} \geq 1 - \frac{g}{n-c-1}.
\end{equation}
By putting together \eqref{eq:ex_ver}, \eqref{eq:p_ver}, and \eqref{eq:lb_p_ver},
we can conclude that the mechanism must verify at least $n-c-1-o(n)$ customers.
The theorem then follows by taking $c=o(n)$.
\end{proof}

We now extend Theorem~\ref{thm:opt_noBayes} in order to work even with mechanisms that implement the public project problem only asymptotically.
\begin{proposition}
\label{propt:opt_noBayes_asym}
 For every $\varepsilon \in [0,1)$ and for every $\varepsilon$-OSP mechanism that implements the public project function asymptotically,
 there is an instance for which the mechanism must verify in expectation $n-o(n)$ agents.
\end{proposition}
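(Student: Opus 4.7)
The plan is to reuse the full machinery of Theorem~\ref{thm:opt_noBayes} after upgrading Lemma~\ref{lem:ver_cond} so that it tolerates the slack introduced by asymptotic (rather than exact) implementation.

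First, I would fix $\varepsilon \in [0,1)$, pick any $\eta$ with $2\eta < 1-\varepsilon$, and invoke the definition of asymptotic implementation to obtain a threshold $n_0 = n_0(\eta)$ such that for all $n \ge n_0$ and every type profile $\b$ of size $n$ we have $|\M(\b) - f(\b)| \le \eta$. When $\out$ is discrete this already forces $\M(\b) = f(\b)$ and the statement reduces immediately to Theorem~\ref{thm:opt_noBayes}; the non-trivial case is when $\M(\b)$ is a real-valued $\eta$-approximation of $f(\b)$ that the agent evaluates with a valuation linear in the outcome (so that $v_i(t_i,x) = t_i x$).

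Next, I would redo the proof of Lemma~\ref{lem:ver_cond} under this $\eta$-approximation. For an agent $i$ of type $1$ with $k_i \in [c+n_i+1-n,\, c-2]$, the worst outcome reachable via $\mathbf{S}_i(1)$ now has value at most $\eta$ (instead of $0$), while the best outcome reachable via $\mathbf{S}_i(\delta)$ has value at least $1-\eta$ (instead of $1$); the deviation gain is therefore still at least $1 - 2\eta > \varepsilon$, so $\varepsilon$-obvious dominance is violated unless $i$ is verified. The symmetric case for type $\delta$ is analogous, and the whole verification requirement of Lemma~\ref{lem:ver_cond} therefore continues to hold for every $n \ge n_0$.

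Finally, I would observe that Observation~\ref{obs:single_better} and Lemma~\ref{lem:worst_order} carry over verbatim, since their proofs only invoke the verification requirement from Lemma~\ref{lem:ver_cond} and not the exactness of the implementation. Plugging $c = 1 + \sqrt{n-1}$ into the same binomial computation that concludes Theorem~\ref{thm:opt_noBayes} then produces, for every $n \ge n_0$, a type profile $\tp$ with $|\{i : t_i = 1\}| = c$ on which any selection rule of the mechanism verifies in expectation at least $n - o(n)$ agents. The only genuinely new ingredient is the perturbation bookkeeping in the middle step, which I expect to be the main obstacle; once any reasonable monotonicity of the valuation in the outcome is adopted, the $1-2\eta$ gap survives and the rest is an immediate appeal to the already-established machinery.
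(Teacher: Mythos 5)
Your plan misses the one genuinely new difficulty that asymptotic implementation introduces, and it is exactly the point to which the paper's proof is entirely devoted: a mechanism that only has to be correct in the limit is no longer forced to keep querying agents until the threshold question is resolved. The machinery you want to reuse relies on the structural fact (used in the base case and inductive step of Lemma~\ref{lem:worst_order} and, crucially, in the counting argument of Theorem~\ref{thm:opt_noBayes}) that a mechanism implementing the public project function \emph{exactly} must keep revealing types until either $c$ agents have declared $1$ or too few remain, so that the $(n-c-1)$-th revealed agent exists and, by Lemma~\ref{lem:ver_cond}, must be verified. An asymptotically implementing (typically randomized) mechanism can instead stop early and guess the outcome; for the agents it never queries, $\rho(i)=h^*$ and Lemma~\ref{lem:ver_cond} --- perturbed by $\eta$ or not --- imposes no verification whatsoever. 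Your $\eta$-perturbation only repairs the incentive constraint at histories that are actually reached; it does nothing to rule out early termination, so the claim that Observation~\ref{obs:single_better} and Lemma~\ref{lem:worst_order} ``carry over verbatim'' is false. Moreover, with a fixed constant slack $\eta$ you could at best bound the early-stopping probability by a constant, which would give an expected number of verified agents of order $(1-\eta)(n-o(n))=\Omega(n)$, not the claimed $n-o(n)$; and your reduction of the discrete-outcome case to Theorem~\ref{thm:opt_noBayes} is unavailable, since the relevant mechanisms are randomized and only correct with probability $1-o(1)$, not exactly correct for large $n$.

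The paper closes precisely this gap. With $c=1+\sqrt{n-1}$ it decomposes the event that at most $n-c-1$ agents are verified into two cases: either $c-1$ type-$1$ agents appear among the first revealed agents (negligible by the binomial computation in Theorem~\ref{thm:opt_noBayes}), or the mechanism halts after revealing fewer than $n-c$ types while the threshold question is still open (the event $E_{n-c-1}$). For the second case it argues by contradiction: if the conditional probability of halting early were $\Omega(1)$, then on a profile with exactly $c$ type-$1$ agents the mechanism would return the wrong decision with probability $\Omega(1)\cdot(1-o(1))=\Omega(1)$, contradicting $\lim_{n\rightarrow\infty}E_{d_c\sim\delta_c}[\M(d_c,\b)]=f(\b)$. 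If you want to salvage your outline, the missing lemma is exactly this statement --- conditional on the threshold being unresolved, the probability that the mechanism stops before the $(n-c-1)$-th revelation is $o(1)$ --- and it must be derived from asymptotic correctness, not from the incentive constraints.
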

\begin{proof}
 Let $c = 1+\sqrt{n-1}$. We next show that the probability that the mechanisms verifies at most $n-c-1$ agents is $o(1)$.
 Note that, differently from the case of mechanisms implementing exactly the desired social choice function,
 there may be two reasons for which a non-exact mechanism verifies at most $n-c-1$ agents:
 either $c-1$ with type $1$ are found among the first $n-c-1$ agents, or the mechanism reveals the type of less than $n-c$ of them.
 Theorem~\ref{thm:opt_noBayes} proves that the first event occurs only with negligible property.
 We next show that this is the case even for the second one.

 As above, let $X$ be the random variable that measures the number of verified agents.
 Moreover, let $E_i$ be the event in which the mechanism has not seen $c$ agents with type $1$ after at least $i+1$ agents that revealed the type.
 Then,
 \begin{align*}
  & \Pr(X \leq n-c-1) =\\
  & \quad \Pr(X \leq n-c-1 \mid \bar{E}_{n-c-1})\cdot\Pr(\bar{E}_{n-c-1})\\
  & \quad + \Pr(X \leq n-c-1 \mid {E}_{n-c-1})\cdot\Pr({E}_{n-c-1})\\
  & \leq o(1) + \Pr(X \leq n-c-1 \mid {E}_{n-c-1}),
 \end{align*}
 where the inequality follows from $\Pr(\bar{E}_{n-c-1}) = o(1)$, as shown in the proof of Theorem~\ref{thm:opt_noBayes}.
 The theorem follows by proving that $\Pr(X \leq n-c-1 \mid {E}_{n-c-1}) = o(1)$.

 Suppose that this is not the case, and $\Pr(X \leq n-c-1 \mid {E}_{n-c-1}) = \Omega(1)$.
 Note that by definition of ${E}_{n-c-1}$, the mechanism terminates before knowing whether there are at least $c$ agents with type $1$.
 Let us suppose that the mechanism decides to not implement the project in this case.
 Consider then an instance $\tp$ with exactly $c$ agents with type $1$.
 Observe that in this case the mechanism does not implement the project whenever it stops after $n-c-1$ agents revealed the type,
and the $c$ agents with type $1$ are not found among these.
 That is, the mechanism returns the wrong outcome with probability at least
 $\Pr(X \leq n-c-1 \mid {E}_{n-c-1})\cdot\Pr({E}_{n-c-1}) \geq \Omega(1) \cdot (1-o(1)) = \Omega(1).$
 But this contradicts the assumption that the mechanism implements the public project function asymptotically.
\end{proof}

\paragraph{Tightness of the bound.}
For a social choice function $f$, the \emph{sensitivity} of $f$ is the smallest integer $d$
such that
for every $i$, every $t_i, t'_i \in D_i$, every $\tp_{-i} \in D_{-i}$, and every $s$.
$$
 \left|f((t_i,\tp_{-i}), s) - f((t'_i,\tp_{-i}), s)\right| \leq \frac{d}{n}.
$$

Fix $c = o(n)$ and $\varepsilon > 0$ such that $\frac{4dc}{\varepsilon} = o(n)$ and let $\beta = \frac{n\varepsilon}{2dc}$.

Consider the \emph{$\beta$-Exponential Mechanism with Partial Verification} {$\M^\beta$}:
ask agents to reveal their type in sequential order;
given a declaration profile $\bp \in D$, choose the outcome ${s_\ell} \in \out {= \{s_1, \ldots, s_{|\out|}\}}$ according to the probability distribution
\begin{equation}
\label{eq:M_beta}
 {M^\beta_\bp(s_\ell)} = \frac{e^{\beta f(\bp,{s_\ell})}}{\sum_{z \in \out} e^{\beta f(\bp,z)}};\footnote{This is equivalent to saying that the mechanism chooses $d_c$ according to the distribution $U(0,1]$, i.e., the uniform distribution on $(0,1]$, and returns $s_\ell$ whenever $d_c \in \left(\sum_{j < \ell} M^\beta_\bp(s_j), \sum_{j < \ell} M^\beta_\bp(s_j) + M^\beta_\bp(s_\ell)\right]$. This formulation would be coherent with the terminology adopted throughout the paper but undoubtedly be more complex.}
\end{equation}
for the first $n-c$ agents, verify if their declared type $b_i$ coincides with their real type $t_i$.

We have the following theorem.
\begin{theorem}
\label{thm:diff}
 For every social choice function $f$, if $|S| \leq e^{o(n)}$, then the $\beta$-Exponential Mechanism with Partial Verification is $2\varepsilon$-OSP {in expectation} and it implements $f$ asymptotically.
\end{theorem}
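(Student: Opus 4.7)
The plan is to treat the two claims of the theorem separately: first, asymptotic implementation of $f$, which follows from standard concentration of the exponential mechanism; second, $2\varepsilon$-obvious dominance in expectation, which combines a verification-with-fines argument for the first $n-c$ agents with a differential-privacy argument for the last $c$.

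For asymptotic implementation, I rely on the standard tail inequality
\[
\Pr\!\left[\max_{s}f(\bp,s) - f\bigl(\bp, \M^\beta(\bp)\bigr) > \theta\right] \;\leq\; |\out|\,e^{-\beta \theta}.
\]
Since $\beta = n\varepsilon/(2dc) \to \infty$ (as $dc/\varepsilon = o(n)$) and $\ln|\out| = o(n)$ by hypothesis, one can pick a sequence $\theta(n) \to 0$ with $\beta\theta(n) - \ln|\out| \to \infty$, so that $f\bigl(\bp, \M^\beta(\bp)\bigr)$ concentrates at $\max_s f(\bp, s)$, and hence $\M^\beta$ implements $f$ asymptotically in expectation.

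For $2\varepsilon$-OSP in expectation, I split by revelation position. For agent $i$ among the first $n-c$ (verified) agents, a lie is caught with certainty, so using any fine $F \geq \lmax - \lmin$ and arguing as in Proposition~\ref{prop:fixProb} (with the verification slack set to zero) shows that truth-telling is $0$-obviously dominant in expectation at $i$'s reveal point: the worst-case truthful utility is at least $\lmin$, whereas the best-case lying utility is at most $\lmax - F \leq \lmin$. For agent $i$ among the last $c$ (unverified) agents, I invoke the $(2\beta d/n)$-differential privacy of the exponential mechanism: since $f$ has sensitivity $d/n$ per agent, for every fixed $\bp_{-i}$, every $s$, and every deviation $b'_i$,
\[
e^{-\varepsilon/c}\, M^\beta_{(t_i,\bp_{-i})}(s) \;\leq\; M^\beta_{(b'_i,\bp_{-i})}(s) \;\leq\; e^{\varepsilon/c}\, M^\beta_{(t_i,\bp_{-i})}(s),
\]
by our choice of $\beta$. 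Shifting $t_i$ by $\lmin$ to reduce to non-negative utilities and bounding $\|M^\beta_{(t_i,\bp_{-i})} - M^\beta_{(b'_i,\bp_{-i})}\|_1 \leq e^{\varepsilon/c} - 1$, I obtain the pointwise estimate
\[
\bigl|E_{d_c}[t_i(\M^\beta(d_c, t_i, \bp_{-i}))] - E_{d_c}[t_i(\M^\beta(d_c, b'_i, \bp_{-i}))]\bigr| \;\leq\; (e^{\varepsilon/c}-1)(\lmax-\lmin) \;\leq\; 2\varepsilon,
\]
where the last inequality uses $e^{\varepsilon/c}-1 \leq 2\varepsilon/c$ and the fact that $c \geq \lmax - \lmin$ for $n$ large (since $c = o(n)$ still diverges).

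The main obstacle I anticipate is closing the gap between the pointwise (same-$\bp_{-i}$) differential-privacy bound above and the $\inf_{S_{-i}}$-vs-$\sup_{S_{-i}}$ formulation that $\varepsilon$-obvious dominance in expectation actually requires. My plan is to argue that $\widehat{W}(I_i,S_i,S'_i)$ together with the conditioning on $\widehat{\delta_c}$ effectively restricts attention to strategy profiles $\bp_{-i}$ differing only in the at most $c$ unverified positions (the verified agents have their behavior pinned down by the fines and the prior revelations are already part of the record reached at $I_i$); for such profiles the cumulative change in $f$ is bounded by $cd/n = o(1)$, so $E_{d_c}[t_i(\M^\beta(d_c,\cdot,\bp_{-i}))]$ is nearly constant over the relevant $\bp_{-i}$. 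Combining this near-constancy with the pointwise DP bound then upgrades the estimate to the inf-sup inequality required by the definition of $\varepsilon$-obvious dominance in expectation, completing the proof.
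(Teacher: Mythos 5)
Your overall architecture matches the paper's: fines make truth-telling obviously dominant for the first $n-c$ (verified) agents, the conditioning on $\widehat{\delta_c}$ is vacuous because the mechanism's randomization happens only after all bids are collected, and the accuracy part is the standard exponential-mechanism tail bound (the paper's Lemma~\ref{lem:approx} does essentially your calculation with $\delta=\log(\beta|\out|)/\beta$). The gap is in the step you yourself flag as the main obstacle, and your proposed fix does not work as stated. You argue that, over the profiles $\bp_{-i}$ relevant at $I_i$ (which differ only in the at most $c$ unverified coordinates), the cumulative change in $f$ is at most $cd/n=o(1)$ and therefore $E_{d_c}[t_i(\M^\beta(d_c,\cdot,\bp_{-i}))]$ is ``nearly constant.'' This ignores the amplification by $\beta=\frac{n\varepsilon}{2dc}$: a shift of $cd/n$ in $f$ changes the exponent $\beta f$ by $\varepsilon/2$, so the induced change in the distribution $M^\beta_{\bp}$ is a multiplicative factor $e^{\Theta(\varepsilon)}$ --- the entire privacy budget --- not $1+o(1)$. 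Consequently the variation of the expected utility over the admissible $\bp_{-i}$ is of order $\varepsilon$ (times the range of $v_i$), and adding it on top of your pointwise own-coordinate bound of roughly $2\varepsilon/c$ either blows past the $2\varepsilon$ target or rests on the false $o(1)$ claim. (Two smaller issues: the theorem does not let you assume $c\to\infty$ or $c\ge\lmax-\lmin$; the paper instead normalizes $v_i(t_i,s)\in[0,1]$, which is what its final inequality uses.)

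The repair is exactly the paper's one-shot argument, which makes your two-step decomposition unnecessary: compare directly a lying profile $\b=(b_i,\bp_{-i})$ and a truthful profile $\b'=(t_i,\bp'_{-i})$ that agree on the verified coordinates (these are pinned down at $i$'s information set, since all verified agents reveal before the unverified ones) but may differ arbitrarily on all $\le c$ unverified coordinates. Then $|f(\b,z)-f(\b',z)|\le cd/n$ for every $z\in\out$, so $M^\beta_{\b}(s)/M^\beta_{\b'}(s)\le e^{2\beta cd/n}=e^{\varepsilon}$, and hence $E_{s\sim M^\beta_{\b}}[v_i(t_i,s)]-E_{s\sim M^\beta_{\b'}}[v_i(t_i,s)]\le(e^{\varepsilon}-1)E_{s\sim M^\beta_{\b'}}[v_i(t_i,s)]\le e^{\varepsilon}-1\le 2\varepsilon$, using $v_i\in[0,1]$. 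In other words, the inf-vs-sup formulation is handled by a single group-privacy bound over all differing coordinates simultaneously, rather than by an own-coordinate DP bound plus a ``near-constancy'' correction; with that replacement your proof coincides with the paper's.
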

The proof follows from the following lemmas.
\begin{lemma}
\label{lem:2epsOSP}
 The $\beta$-Exponential Mechanism with Partial Verification is $2\varepsilon$-OSP {in expectation}.
\end{lemma}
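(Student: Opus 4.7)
The plan is to verify the $2\varepsilon$-obviously-dominant-in-expectation condition for each agent $i$ separately, distinguishing the first $n-c$ verified agents from the last $c$ unverified ones. Since the mechanism is sequential direct revelation with all randomization deferred to the final outcome selection, every information set of agent $i$ is reached regardless of $d_c$, and the conditional distribution $\widehat{\delta_c}$ simply coincides with $\delta_c$. Moreover, at $i$'s single information set $I_i$ along the path of play, the declarations of all $k-1$ preceding agents are already pinned down by $I_i$; this structural fact drives the analysis of unverified agents.

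\emph{Verified agents.} For $i$ among the first $n-c$, verification is deterministic ($p^i_{t',t}(\bi)=0$ when $t \neq t'$), hence $p^i_{\max}=0$. By Proposition~\ref{prop:fixProb}, any fine $F^i_{t',t}(\bi)\geq t(\M^\beta(t',\bi))-\lmin$, finite since the domain is bounded, makes truthtelling $0$-obviously dominant for $i$, and a fortiori $2\varepsilon$-obviously dominant in expectation.

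\emph{Unverified agents.} For $i$ at step $k > n-c$, no fines are ever levied on $i$, so the comparison is purely between the induced output distributions. Let $\bp_{-i}$ and $\bp'_{-i}$ be two witness profiles at $I_i$, for truthful play and for an alternative declaration $t'_i$ respectively. They must coincide on coordinates $j<k$ (forced by $I_i$) and can differ only on the $n-k \leq c-1$ coordinates with $j>k$. Together with $i$'s own change of declaration, the full profiles $(t_i,\bp_{-i})$ and $(t'_i,\bp'_{-i})$ differ on at most $c$ coordinates. The standard exponential-mechanism analysis, leveraging that a single-coordinate change alters $f(\cdot,s)$ by at most $d/n$, gives
\[
 M^\beta_{(t'_i,\bp'_{-i})}(s) \leq e^{2c\beta d/n}\, M^\beta_{(t_i,\bp_{-i})}(s) = e^{\varepsilon}\, M^\beta_{(t_i,\bp_{-i})}(s)
\]
for every $s \in \out$, using $\beta = n\varepsilon/(2dc)$: each coordinate change contributes a factor $e^{2\beta d/n}$ because the numerator $e^{\beta f(\bp,s)}$ grows by at most $e^{\beta d/n}$ and the normalizer $\sum_z e^{\beta f(\bp,z)}$ shrinks by at most the same factor. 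Taking expectations (with $t_i$ normalized to $[0,1]$ via an innocuous affine shift of the bounded domain),
\[
 E[t_i(\M^\beta(t'_i,\bp'_{-i}))] \leq e^\varepsilon\, E[t_i(\M^\beta(t_i,\bp_{-i}))] \leq E[t_i(\M^\beta(t_i,\bp_{-i}))] + (e^\varepsilon - 1),
\]
and $e^\varepsilon - 1 \leq 2\varepsilon$ in the regime of interest. Since the estimate is uniform in the two witnesses, taking inf on the left and sup on the right preserves the gap, establishing the $2\varepsilon$-obviously-dominant-in-expectation inequality.

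\emph{Main obstacle.} The subtle point is that the OSP-in-expectation definition takes an independent inf and sup over witnesses on the two sides; a naive per-profile differential-privacy argument would allow profiles differing on $\Theta(n)$ coordinates, and the resulting multiplicative factor $e^{k\varepsilon/c}$ would explode for $k = \Theta(n)$ and $c = o(n)$. The resolution is to notice that, once we restrict to witnesses reaching the same $I_i$, only the declarations of agents strictly after $i$ remain free, which caps $k$ at $c-1$; combined with $i$'s own deviation, there are at most $c$ effective differences, and the parameter $\beta = n\varepsilon/(2dc)$ is tuned so that this worst case contributes exactly $\varepsilon$ in the exponent.
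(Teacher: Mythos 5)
Your proof is correct and follows essentially the same route as the paper's: handle the first $n-c$ agents via deterministic verification plus fines, note that the end-of-game randomization makes $\widehat{\delta_c}=\delta_c$, and for the last $c$ agents run the exponential-mechanism argument on profiles differing in at most $c$ coordinates to get the ratio bound $e^{\varepsilon}$ and hence a gap of $e^{\varepsilon}-1\leq 2\varepsilon$ using $v_i(t_i,s)\in[0,1]$. The only (immaterial) difference is how you cap the number of differing coordinates: you argue the information set $I_i$ pins down all earlier declarations, whereas the paper fixes the first $n-c$ coordinates to the true types on account of their being verified; both yield the same at-most-$c$ bound that the choice $\beta=\frac{n\varepsilon}{2dc}$ is tuned to.
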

The proof mimics Lemma~1 and Lemma~2 in \citep{diff}.
\begin{proof}
 The claim obviously hold for every agent selected among the first $n-c$, since an untruthful bid would be discovered by the verification; as from above, we can then use fines to obviously discourage this misreport.

 We will next show that for every agent $i$ not selected among the first $c$, any type $t_i$,
 and any two bid profiles $\b=(b_i,\bp_{-i})$ and $\b'=(t_i,\bp'_{-i})$ such that $b_j = b'_j = t_j$ for each agent $j \neq i$ among the first $n-c$ agents, it holds that
 $$E_{{s \sim M^\beta_{\b}}}[v_i(t_i,s)] - E_{{s \sim M^\beta_{\b'}}}[v_i(t_is)] \leq 2\varepsilon.\footnote{Observe that, since the random choice of the mechanism is only done at a terminal node, then for every agent $i$, for every pair of strategies $S_i = \textbf{S}_i(b_i)=b_i, S'_i = \textbf{S}_i(b'_i)=b_i'$, for every $I_i \in \alpha(S_i, S'_i)$ and every $S_{-i} = (\textbf{S}_j(b_j))_{j \neq i} \in \widehat{W}(I_i, S_i, S'_i)$, we have that $(S_{-i}, d_c)$ witnesses that $I_i \in \alpha(S_i, S'_i)$ for every realization $d_c$. In other words, $\delta_c \mid (I_i, S_i, S_i', S_{-i}) = \delta_c$, and thus $$E_{d_c \sim \delta_c \mid (I_i, S_i, S_i', S_{-i})}[v_i(t_i, \M^\beta(d_c, \b))] = E_{d_c \sim U(0,1]}[v_i(t_i, \M^\beta(d_c,\b))] = E_{s \sim M^{\beta}_{\b}}[v_i(t_i, s)]$$ for every $b_i, b'_i$ and $\bp_{-i}$.}$$
 The lemma then follows.

 To this aim, let us first prove that the probability that $M^\beta$ returns $s$ does not change too much, regardless the bid profile in input.
 Indeed, since $f$ is $d$-sensitive and $(b_i,\bp_{-i})$ and $(t_i,\bp'_{-i})$ differ in at most $c$ positions,
 then $f((b_i,\bp_{-i}),z) - \frac{cd}{n} \leq f((t_i,\bp'_{-i}),z) \leq f((b_i,\bp_{-i}),z) + \frac{cd}{n}$ for every $z \in \out$. Hence,
 \begin{align*}
  \frac{M^\beta_{{\b}}(s)}{M^\beta_{{\b'}}(s)} & = \frac{\frac{e^{\beta f((b_i,\bp_{-i}),s)}}{\sum_{z \in S} e^{\beta f((b_i,\bp_{-i}),z)}}}{\frac{e^{\beta f((t_i,\bp'_{-i}),s)}}{\sum_{z \in S} e^{\beta f((t_i,\bp'_{-i}),z)}}}\\
  & \leq \frac{\frac{e^{\beta f((b_i,\bp_{-i}),s)}}{\sum_{z \in S} e^{\beta f((b_i,\bp_{-i}),z)}}}{\frac{e^{\beta(f((b_i,\bp_{-i}),s)-\frac{cd}{n})}}{\sum_{z \in S} e^{\beta(f((b_i,\bp_{-i}),z)+\frac{cd}{n})}}}
  \leq e^{\varepsilon},
 \end{align*}
 where the last inequality follows from our choice of $\beta$.

 Hence, we have that
 \begin{align*}
  E_{{s \sim M^\beta_{\b}}}[v_i(t_i,s)] & = \sum_{s \in \out} v_i(t_i,s) M^\beta_{{\b}}(s)\\
  & \leq e^{\varepsilon} \cdot \sum_{s \in \out} v_i(t_i,s) M^\beta_{{\b'}}(s)\\
  & = e^{\varepsilon} \cdot E_{{s \sim M^\beta_{\b'}}}[v_i(t_i,s)].
 \end{align*}
 Then
 \begin{align*}
  & E_{{s \sim M^\beta_{\b}}}[v_i(t_i,s)] - E_{{s \sim M^\beta_{\b'}}}[v_i(t_i,s)]\\
  & \quad \leq (e^{\varepsilon} - 1) E_{{s \sim M^\beta_{\b'}}}[v_i(t_i,s)]\\
  & \quad \leq e^{\varepsilon} - 1 \leq 2\varepsilon,
 \end{align*}
 where the last inequalities follows because $v_i(t_i,s) \in [0,1]$ for every $s$ and $e^{\varepsilon} - 1 \leq 2\varepsilon$ for every $\varepsilon \in [0,1]$.
\end{proof}

\begin{lemma}
 \label{lem:approx}
 If $|\out| \leq e^{o(n)}$, then the $\beta$-Exponential Mechanism with Partial Verification implements $f$ asymptotically.
\end{lemma}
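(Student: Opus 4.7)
The plan is to recycle the standard utility analysis of the exponential mechanism from \citep{diff}, adapted to our parameter regime $\beta = \frac{n\varepsilon}{2dc}$. Fix a profile $\bp$ and set $f^\star = \max_{s \in \out} f(\bp, s)$. I will show that $E_{s \sim M^\beta_\bp}[f(\bp,s)] \to f^\star$ as $n \to \infty$; since the verification step forces truthful play on the first $n-c$ coordinates (so conditional on the mechanism's random choices the input to the exponential step is $\bp = \tp$ on these coordinates) this convergence is exactly what asymptotic implementation in expectation requires.

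The heart of the argument is the textbook tail bound. For any threshold $\Delta > 0$, bounding the denominator below by keeping only the maximising term and bounding the numerator above by the deficit,
\begin{align*}
\Pr_{s \sim M^\beta_\bp}\!\bigl[f(\bp,s) \leq f^\star - \Delta\bigr]
\;=\; \frac{\sum_{s\,:\,f(\bp,s) \leq f^\star - \Delta} e^{\beta f(\bp,s)}}{\sum_{s \in \out} e^{\beta f(\bp,s)}}
\;\leq\; \frac{|\out|\cdot e^{\beta(f^\star - \Delta)}}{e^{\beta f^\star}}
\;=\; |\out|\cdot e^{-\beta\Delta}.
\end{align*}
Choosing $\Delta_n = (\log|\out| + \log n)/\beta = \frac{2dc}{n\varepsilon}\bigl(\log|\out| + \log n\bigr)$ forces the right-hand side to be at most $1/n$.

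To conclude, using that $f$ lies in a bounded range and that outcomes outside the ``bad'' set have score at least $f^\star - \Delta_n$,
\begin{align*}
E_{s \sim M^\beta_\bp}\!\bigl[f(\bp,s)\bigr] \;\geq\; (f^\star - \Delta_n)\bigl(1 - 1/n\bigr),
\end{align*}
which tends to $f^\star$ provided $\Delta_n \to 0$. Under the hypotheses $|\out| \leq e^{o(n)}$ (so $\log|\out| + \log n = o(n)$) and $4dc/\varepsilon = o(n)$ (so $2dc/(n\varepsilon) = o(1)$), the product $\Delta_n$ vanishes and the desired limit is obtained.

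The main obstacle, in my view, is the joint parameter bookkeeping: I must simultaneously achieve $\beta\Delta_n \to \infty$ (so the failure probability is $o(1)$) and $\Delta_n \to 0$ (so the conditional utility on the ``good'' event is arbitrarily close to $f^\star$). This is precisely why the theorem pairs the two hypotheses on $|\out|$ and on $dc/\varepsilon$: they must cooperate so that $\beta = \omega(\log|\out| + \log n)$. Beyond this arithmetic check, the proof is a verbatim transcription of the exponential mechanism's utility lemma, with the bounded range of $f$ replacing the usual sensitivity-$1$ normalisation.
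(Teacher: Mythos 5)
Your proof is correct and follows essentially the same route as the paper's: bound the probability mass on ``bad'' outcomes using the exponential form, choose a threshold $\Delta$ scaling like $\log(\cdot)/\beta$, and conclude that the expected quality is within $o(1)$ of the optimum under the stated hypotheses on $|\out|$ and $dc/\varepsilon$. The only cosmetic difference is the choice of threshold: the paper takes $\delta = \frac{\log\beta|\out|}{\beta}$, giving failure probability $1/\beta$, while you take $\Delta_n = \frac{\log|\out|+\log n}{\beta}$, giving failure probability $1/n$; both lead to the same final asymptotic bound.
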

The proof mimics Lemma~3 in \cite{diff}.
\begin{proof}
 Consider $n$ such that $\beta|S| > e$ (i.e., $n > \frac{2edc}{\varepsilon |S|}$) and let $\delta = \frac{\log \beta|S|}{\beta}$.
 Observe that, by our choice of $n$, $\log \beta|S| > \log e = 1$, and, consequently, we have that $\delta \geq \frac{1}{\beta} > 0$.

 For every type profile $\tp$ and every \emph{bad} outcome $s \in \tilde{S} = \left\{\tilde{s} \in S \colon f(\tp,s) < \max_z f(\tp, z) - \delta\right\}$, we have that
 $$
  M^\beta_{{\tp}}(s) = \frac{e^{\beta f(\tp,s)}}{\sum_{z \in S} e^{\beta f(\tp,z)}} \leq \frac{e^{\beta\left(\max_z f(\tp, z) - \delta\right)}}{e^{\beta \max_z f(\tp, z)}} = e^{-\beta\delta}.
 $$
 Thus, $M^\beta_{{\tp}}(\tilde{S}) = \sum_{s\in \tilde{S}} M^\beta_{{\tp}}(s) \leq |\tilde{S}| e^{-\beta\delta} \leq |S| e^{-\beta\delta}$.
 In turn, this implies that with probability at least $1 - |S| e^{-\beta\delta}$ the mechanism returns an outcome $s$ such that $f(s) \geq \max_z f(\tp, z) - \delta$. Then,
 \begin{align*}
  E_{{s \sim M^\beta_\tp}}[f(\tp, {s})] & \geq \left(\max_z f(\tp, z) - \delta\right)\left(1 - |S| e^{-\beta\delta}\right)\\
  & \geq \max_z f(\tp, z) - \delta - |S| e^{-\beta\delta},
 \end{align*}
 where in the last inequality we used that $f(\tp, z) \in [0,1]$. By substituting $\delta$ in the above equation, we have
 \begin{align*}
  E_{{s \sim M^\beta_\tp}}[f(\tp, {s})] & \geq \max_z f(\tp, z) - \frac{\log \beta|S|}{\beta} - \frac{1}{\beta}\\
  & \geq \max_z f(\tp, z) - \frac{2\log \beta|S|}{\beta}.
 \end{align*}
 Thus, the mechanism $\M^\beta$ has, in expectation, an additive approximation of $\frac{2\log \beta|S|}{\beta} = \frac{4dc}{n\varepsilon} \log \frac{n \varepsilon |S|}{2dc}$,
 that is $o_n(1)$ since $\frac{4dc}{\varepsilon} = o(n)$ and $|S| \leq e^{o(n)}$.
\end{proof}

We next show that
we can extend, under opportune conditions, the theorem above
to prove the existence of a \emph{strictly} OSP {in expectation} mechanism
that implements asymptotically every social choice function $f$ whenever the mechanism is able to limit the way an agent reacts to the announced outcome.

\paragraph{Tight OSP {in expectation} imposing mechanism.}
Note that every direct-revelation mechanism can be phrased as follows:
First, the designer collects bids $\bp$ from agents.
Then, it proposes an outcome $s$ to agents.
Finally, agents react to the proposed outcome.
E.g., in an auction, when the auctioneer proposes a price, the agent may accept to buy the good at that price, or refuse.
Similarly, in facility location, the auctioneer decides the location of facilities,
whereas the agent reacts by choosing the closest facility.
The utility $v_i$ of the player $i$ will then depend on her type $t_i$, the proposed outcome $s$,
and the \emph{reaction} $r_i$. I.e., $v_i(t_i, s) = v_i(t_i, s, r_i)$.
A reaction $r_i^*(t_i,s)$ is said to be optimal for agent $i$ of type $t_i$ at $s$ if it maximizes the valuation of $i$
when the proposed outcome is $s$. I.e., $r_i^*(t_i,s) = \arg \max_{r} v_i(t_i,s,r)$.
A mechanism is \emph{imposing} if it forces the reaction of agent $i$ declaring $b_i$ to be $r_i^*(b_i,s)$,
i.e., the optimal reaction according to the declared bid.
For example, an auction is imposing if the agent cannot refuse to buy the item if the proposed price is below her declaration;
a mechanism for facility location is imposing if it does not allow an agent to use a facility different
from the one that is closest to agent's declared position.
Hence, in an imposing mechanism, if the bid of agent $i$ is $b_i$,
then $v_i(t_i,s)=v_i(t_i, s, r_i^*(b_i,s))$ for every $i$, $t_i$, $s$.
Moreover, observe that for every $i$, every $t_i$, and every $s$, by definition of optimal reaction, we have that
$v_i(t_i,s,r_i^*(t_i,s)) \geq v_i(t_i,s,r_i^*(b_i,s))$.

We will next prove that, for every social choice function $f$,
there is an imposing mechanism with partial probabilistic verification that implements $f$ asymptotically
and is strictly OSP  {in expectation}.
To this aim,
set $\gamma(i, t_i, b_i) = \max_s v_i(t_i,s,r_i^*(t_i,s)) - v_i(t_i,s,r_i^*(b_i,s))$,
and denote the outcome achieving this maximum as
$s(i, t_i, b_i) = \arg \max_s v_i(t_i,s,r_i^*(t_i,s)) - v_i(t_i,s,r_i^*(b_i,s))$.
Moreover, set $\gamma = \min_{i, t_i, b_i \neq t_i} \gamma(i, t_i, b_i)$.
If $\gamma = 0$, then it means that reactions do not influence the players utility,
and hence imposing a reaction cannot have effect on the mechanism.
For this reason, we will consider henceforth on a setting $(n, D, \out, (v_i)_{i \in [n]})$
for which $\gamma > 0$.

%
Consider a social choice function $f$ with sensitivity $d$ and let $\gamma$ as defined above.
Let $\varepsilon=\sqrt{\frac{\gamma d}{|\out|n}} \sqrt{\log \frac{n\gamma}{2d}}$, $c=o\left(\sqrt{\frac{n}{|\out|\log n}}\right) \geq 1$,
and $q = \frac{2|\out|\varepsilon}{\gamma}$.
Let $P$ be the probability distribution that assigns probability $\frac{1}{|\out|}$ to each $s \in \out$.
We define the imposing mechanism $\M^{\beta,P}$ as follows:
ask agents to reveal the type in sequential order; for each agent $i$ in the first $n-c$ agents, we verify if her declared type $b_i$ coincides with her real type $t_i$;
given a declaration profile $\bp \in D$, then, with probability $1-q$ choose the outcome $s \in \out$ according to probability distribution {$M^\beta_\bp$}
as defined in $\eqref{eq:M_beta}$, with $\beta = \frac{n\varepsilon}{2dc}$,
and with the remaining probability choose the outcome according to probability distribution $P$.

Let $n_0$ be the smallest integer such that $n_0 \geq \frac{8d|\out|}{\gamma} \log \frac{\gamma}{2d}$ and $\frac{n_0}{\log n_0} > \frac{8d|\out|}{\gamma}$.
We the have the following theorem, whose proof mimics Theorem~2 in \citep{diff}
\begin{theorem}
 For every social choice function $f$, if $n>n_0$ and $|\out| = o(n)$, then $\M^{\beta,P}$ is OSP {in expectation}, it verifies at most $n-o(n)$ agents and it implements $f$ asymptotically.
\end{theorem}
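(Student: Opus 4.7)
The plan is to verify the three properties separately, drawing heavily on the structure of the previous results. The count of verified agents is immediate: by construction $\M^{\beta,P}$ verifies exactly the first $n-c$ agents in the revelation order, and since $c=o\left(\sqrt{n/(|\out|\log n)}\right)=o(n)$, this is $n-o(n)$.

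For OSP in expectation, I would split the agents according to their position in the revelation order. For any agent $i$ among the first $n-c$, truthful reporting is strictly (obviously) dominant because verification is guaranteed, and as in Lemma~\ref{lem:fixFines} / Proposition~\ref{prop:fixProb} we can set fines large enough (e.g., exceeding $\lmax-\lmin$) so that any lie yields strictly lower expected utility even conditioned on any $S_{-i}$. As in Lemma~\ref{lem:2epsOSP}, the randomization happens only at the terminal node, so $\delta_c\mid(I_i,S_i,S_i',S_{-i})=\delta_c$, and obvious dominance in expectation reduces to a pointwise (in $S_{-i}$) comparison of expected utilities. For an agent $i$ among the last $c$ with true type $t_i$ and alternative bid $b_i\neq t_i$, decompose the expected utility as a convex combination indexed by $q$: with probability $1-q$ the outcome follows $M^\beta_\bp$ and with probability $q$ it follows $P$. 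The $P$-term gives a gain of at least $\gamma/|\out|$ in favor of truthtelling, since imposition forces the reaction to be $r_i^*(b_i,s)$ instead of $r_i^*(t_i,s)$, and by definition of $\gamma$ and $s(i,t_i,b_i)$ the expected gap under the uniform distribution on $\out$ is at least $\gamma(i,t_i,b_i)/|\out|\ge\gamma/|\out|$. The $M^\beta$-term gives a loss of at most $2\varepsilon$: by optimality of $r_i^*(t_i,s)$ pointwise in $s$ and the sensitivity bound $M^\beta_{\b}(s)\le e^\varepsilon M^\beta_{\b'}(s)$ from Lemma~\ref{lem:2epsOSP}, the expected utility under the lie is at most $e^\varepsilon$ times the expected utility under truth-telling, whence the deficit is at most $(e^\varepsilon-1)\le 2\varepsilon$ (valuations lying in $[0,1]$). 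Summing,
\[
E[u_i\mid t_i]-E[u_i\mid b_i] \;\ge\; q\cdot\frac{\gamma}{|\out|}-(1-q)\cdot 2\varepsilon \;\ge\; \frac{q\gamma}{|\out|}-2\varepsilon \;=\;0,
\]
by the choice $q=2|\out|\varepsilon/\gamma$. Thus $\M^{\beta,P}$ is $0$-OSP in expectation.

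For asymptotic implementation, apply Lemma~\ref{lem:approx} to the $M^\beta$ part: conditional on using $M^\beta_\tp$, the expected value of $f$ is at least $\max_z f(\tp,z)-\frac{2\log\beta|\out|}{\beta}$. The $P$-part contributes nonnegative value. Hence
\[
E_{s\sim \M^{\beta,P}}[f(\tp,s)] \;\ge\; (1-q)\left(\max_z f(\tp,z)-\tfrac{2\log\beta|\out|}{\beta}\right).
\]
It remains to check that $q=o(1)$ and $\frac{2\log\beta|\out|}{\beta}=o(1)$. Substituting the definitions of $\varepsilon$ and $q$ gives $q=\Theta\!\left(\sqrt{|\out|\log n/n}\right)$, and $\beta=\Theta\!\left(\tfrac{1}{c}\sqrt{n\log n/|\out|}\right)$, so $\log(\beta|\out|)/\beta=\Theta\!\left(\sqrt{c^2|\out|\log n/n}\right)$. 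Both quantities are $o(1)$ under the hypotheses $|\out|=o(n)$ and $c=o\!\left(\sqrt{n/(|\out|\log n)}\right)$, while the conditions $n>n_0$ ensure $\log(\beta|\out|)>0$ and the approximation analysis of Lemma~\ref{lem:approx} applies. The main obstacle is verifying that the boost of $\gamma/|\out|$ coming from the uniform perturbation precisely compensates the $2\varepsilon$ loss from the exponential part while simultaneously keeping both $q$ and the $M^\beta$ approximation error vanishing — the regime $c=o\!\left(\sqrt{n/(|\out|\log n)}\right)$ is calibrated to thread this needle.
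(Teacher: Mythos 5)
Your proposal is correct and follows essentially the same route as the paper's proof: verification count by construction, OSP in expectation by combining the $2\varepsilon$ bound of Lemma~\ref{lem:2epsOSP} for the exponential part with the $\gamma/|\out|$ advantage of truthtelling under the imposed uniform perturbation (with $q=2|\out|\varepsilon/\gamma$ balancing the two), and asymptotic implementation via Lemma~\ref{lem:approx} plus the vanishing of $q$ and of the exponential-mechanism error. The only cosmetic differences are that the paper invokes \cite[Lemma~5]{diff} (via $n>n_0$) for $q\gamma/|\out|\geq 2\varepsilon$ and feasibility $q<1$, and keeps the $(1-q)$ factor to conclude \emph{strict} OSP in expectation, while your bound discards it and yields $0$-OSP, which is all the theorem claims; also note that $q=o(1)$ really rests on the existence of $c\geq 1$ with $c=o\bigl(\sqrt{n/(|\out|\log n)}\bigr)$ rather than on $|\out|=o(n)$ alone.
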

\begin{proof}
 The fact that $\M^{\beta,P}$ verifies at most $n-o(n)$ agents follows directly from the definition and our choice for $c$.
 Moreover, in \cite[Lemma~5]{diff}, it is proved that if $n>n_0$, then $\frac{q\gamma}{|S|} \geq 2\varepsilon$.
 Therefore, from Lemma~\ref{lem:2epsOSP} and since
 $$E_{s \sim P}[v_i(t_i,s,r_i^*(t_i,s))] \geq E_{s \sim P}[v_i(t_i,s,r_i^*(b_i,a))] + \frac{\gamma}{|\out|},$$
 it follows that $\M^{\beta,P}$ is strictly OSP {in expectation}.
 Finally, {let $M^{\beta, P}_\tp = (1-q)M^\beta_\tp + qP$ be the distribution from which $\M^{\beta, P}$ draws the outcome. We then} have that
 $$
  E_{{s \sim M^{\beta,P}_\tp}}[f(\tp,{s})] = (1-q) E_{{s \sim M^\beta_\tp}}[f(\tp, {s})] + q E_{{s \sim P}}[f(\tp, {s})] \geq E_{{s \sim M^\beta_\tp}}[f(\tp, {s})].
 $$
 As shown in the proof of Lemma~\ref{lem:approx}, we have that
 $$
  E_{{s \sim M^\beta_\tp}}[f(\tp, {s})] \geq \max_z f(\tp, z) - \frac{4dc}{n\varepsilon} \log \frac{n \varepsilon |S|}{2dc} \geq \max_z f(\tp, z) - \frac{4dc}{n\varepsilon} \log \frac{n \varepsilon |S|}{2d}.
 $$
 Moreover, since $\max_z f(\tp, z) \leq 1$ and $q = \frac{2|S|\varepsilon}{\gamma} < 1$, we have that
 $$
  E_{{s \sim M^\beta_\tp}}[f(\tp, {s})] \geq \max_z f(\tp, z) - q - \frac{4dc}{n\varepsilon} \log \frac{n \varepsilon |S|}{2d} \geq \max_z f(\tp, z) - \frac{2|S|\varepsilon}{\gamma} - \frac{4dc}{n\varepsilon} \log \frac{n \gamma}{2d}.
 $$
 By substituting $\varepsilon$ in the above equation, we have
 $$
  E_{{s \sim M^\beta_\tp}}[f(\tp, {s})] \geq \max_z f(\tp, z) - (4c+2)\sqrt{\frac{d|S|}{\gamma n}}\sqrt{\log \frac{n\gamma}{2d}} = \max_z f(\tp, z) - o_n(1),
 $$
 where the last step follows from our hypothesis on $|S|$ and our choice for $c$.
\end{proof}

\section{Conclusions}
\cite{li2015obviously} proved that OSP is the ``right'' definition of truthfulness for a kind of
``bounded rational'' agents, where the kind of bounded rationality 
(i.e., those who have limited who lack contingent reasoning skills)
is exactly the one observed in many experimental settings.
This makes it interesting to investigate what can and cannot be done with these partially rational agents,
and how these limits can be addressed (e.g., can verification help?).
This motivates ours and most of recent work on the topic \citep{BadeG17,FerraioliV17}.
It would still be interesting to investigate mechanism design for other (possibly, less stringent) notions of bounded rationality.

In Section~\ref{sec:partial}, we focused on the binary public project problem.
The simplicity of this problem, in our opinion, has two advantages: firstly,
it makes our result stronger, since we are giving a negative result;
secondly, it improves the readability of the proof, since we do not need to deal with the complexities of the problem.
However, a detailed look at our proof highlights that we 
uses the structure of the problem only to prove that:
(i) one needs to verify almost every agent that reveal her type until you find a solution;
(ii) there is an instance for which it is highly unlikely to find a solution after only few agents revealed her type.
It is not hard then to see that these properties are enjoyed not only by the public project problem,
but also by its combinatorial counterpart
\citep{buchfuhrer2010computation,dughmi2011truthful}
and many other different problems (e.g., facility location).
It would, however, be interesting to find settings in which an OSP mechanism with partial probabilistic verification exists
that verifies only few agents.

Finally, we proved that an $\varepsilon$-OSP {in expectation} mechanism with partial probabilistic verification that implements a social choice function $f$ asymptotically and verifies at most $n-o(n)$ agents always exists.
We also proved that we can also turn the mechanism above in one that is strictly OSP {in expectation} if we focus on imposing mechanism.
However, it would be interesting to understand: {(i)} whether reactions' imposition is necessary, or whether there is a non-imposing OSP mechanism with the desired properties; {(ii) whether $\varepsilon$-OSP \emph{deterministic} mechanism exists with the desired properties}.

\bibliographystyle{plainnat}
\bibliography{osp}

\end{document}